\documentclass[a4paper,UKenglish,cleveref,autoref,thm-restate,numberwithinsect]{lipics-v2021}
\pdfoutput=1

\hideLIPIcs
\nolinenumbers

\usepackage{booktabs}
\usepackage{graphicx}
\graphicspath{{figures/}}

\newcommand{\MainBound}{0.8344}                    
\newcommand{\ClassicalCeiling}{0.833597897}        
\newcommand{\CeilingDirs}{1{,}000{,}000}           
\newcommand{\HeadroomToCeiling}{\ensuremath{3.8\times10^{-4}}} 

\newcommand{\PalLB}{\ensuremath{\pi/8+\sqrt3/4}}   
\newcommand{\ElekesLB}{0.8271}                     
\newcommand{\BSLB}{0.832}                          
\newcommand{\XieLB}{0.833}                         
\newcommand{\GibbsUpper}{0.8440935944}             
\newcommand{\GibbsHeuristic}{0.83699098}           
\newcommand{\GibbsOursOnHis}{0.837296}             

\newcommand{\Nodes}{486{,}799{,}600}               
\newcommand{\Leaves}{245{,}496{,}952}              
\newcommand{\HardBoxes}{450{,}922{,}384}           
\newcommand{\HardMinutes}{364.9}                   
\newcommand{\EmitMinutes}{591.1}                   
\newcommand{\TreeMB}{60.85}                        
\newcommand{\BlockIndexMB}{16.78}                  

\newcommand{\CertMB}{81.3}                         
\newcommand{\BitsPerNode}{1.34}                    
\newcommand{\WorstSlack}{\ensuremath{1.2409\times10^{-5}}}   
\newcommand{\CrossCheckLeaves}{24{,}704}           
\newcommand{\CrossCheckWorst}{\ensuremath{6.05\times10^{-13}}} 

\newcommand{\Unit}{\ensuremath{2^{-53}}}           
\newcommand{\Tau}{\ensuremath{10^{-12}}}           
\newcommand{\EpsMem}{\ensuremath{1.0016\times10^{-12}}}  
\newcommand{\PerimBound}{8.026}                    
\newcommand{\ErrMembership}{\ensuremath{8.04\times10^{-12}}}  
\newcommand{\VertexBound}{3072}                    
\newcommand{\ShoelaceS}{5012}                      
\newcommand{\ErrShoelace}{\ensuremath{1.71\times10^{-9}}}     
\newcommand{\Eta}{\ensuremath{1.72\times10^{-9}}}             
\newcommand{\SlackOverEta}{\ensuremath{7.2\times10^{3}}}      
\newcommand{\HighPrecDiff}{\ensuremath{6.66\times10^{-15}}}   
\newcommand{\HighPrecDigits}{60}                   

\newcommand{\TmaxThree}{0.192366}                  
\newcommand{\TmaxFive}{0.195891}                   
\newcommand{\DomainShrink}{169}                    
\newcommand{\TmaxNaive}{0.70}                      
\newcommand{\CircThree}{0.577350269190}            
\newcommand{\CircFive}{0.525731112119}             

\newcommand{\AuditChecks}{29}                      
\newcommand{\KernelGates}{28}                      
\newcommand{\ControlsPassed}{22}                   
\newcommand{\PalAgreement}{\ensuremath{1.1\times10^{-16}}}    
\newcommand{\BSFigAgreement}{\ensuremath{3.2\times10^{-6}}}   

\newcommand{\BoxesCheapTarget}{25{,}425{,}568}     
\newcommand{\BoxesSharpDomain}{27{,}350{,}260}     

\newcommand{\ClassicalTthree}{(0.004667984,-0.008121781)}   
\newcommand{\ClassicalRho}{1.464915745}                     
\newcommand{\ClassicalTfive}{(0.012090029,-0.020585645)}    
\newcommand{\ReuleauxWitness}{(-0.012158832,\,-0.000000037,\,1.256661190,\,0.022996607,\,-0.000003897)}   
\newcommand{\ClassicalCeilingSafe}{0.8336}         
\newcommand{\FamilyBCeilingSafe}{0.83479}          
\newcommand{\ClassicalOuterA}{0.833638744}         
\newcommand{\ClassicalOuterB}{0.833601460}         
\newcommand{\ClassicalOuterC}{0.833597897}         
\newcommand{\ReuleauxOuterC}{0.834781191}          

\newcommand{\fAtTmax}{0.836548794}                 
\newcommand{\InradThree}{0.422649730810}           
\newcommand{\InradFive}{0.474268887881}            

\newcommand{\Inflate}{\ensuremath{10^{-11}}}       
\newcommand{\WitnessM}{1024}                       
\newcommand{\WitnessK}{1024}                       
\newcommand{\SeedDepth}{22}                        
\newcommand{\NSeed}{4{,}194{,}304}                 
\newcommand{\Hmin}{\ensuremath{10^{-5}}}           

\newcommand{\FiveDExpGlobal}{2.66}                 
\newcommand{\FiveDExpTight}{1.94}                  
\newcommand{\EightDExpGlobal}{5.50}                
\newcommand{\HalfDimFive}{2.5}                     
\newcommand{\HalfDimEight}{4.0}                    
\newcommand{\EightDOptimum}{0.836494901}           
\newcommand{\EightDBoxes}{\ensuremath{8.6\times10^{11}}}  
\newcommand{\EightDDays}{880}                      
\newcommand{\NextTargetBoxes}{\ensuremath{1.9\times10^{9}}}   
\newcommand{\NextTargetHours}{27}                  

\newcommand{\NextTarget}{0.8346}                  
\newcommand{\CheapTarget}{0.833}                  

\newcommand{\ThroughputFive}{19{,}648}             
\newcommand{\ThroughputEight}{11{,}351}            
\newcommand{\FiveDOptimum}{0.834780947}            
\newcommand{\RepoCommit}{da956b117a}            

\newcommand{\Rot}{\mathcal{R}}

\title{Curves of constant width and a lower bound for Lebesgue's universal
covering problem}

\titlerunning{Curves of constant width and Lebesgue's covering problem}

\author{Ujjwal Mishra}{Indian Institute of Information Technology Una, India}
       {ujjwalmishra238@gmail.com}{}{}

\authorrunning{U. Mishra}
\Copyright{Ujjwal Mishra}
\ccsdesc[500]{Theory of computation~Computational geometry}
\ccsdesc[300]{Mathematics of computing~Combinatoric problems}
\keywords{Lebesgue universal cover, lower bound, constant width, Reuleaux
polygon, computer-assisted proof, branch and bound, certificate}

\supplement{Software (Source Code):\\
\url{https://github.com/Ujjwal238/universal-cover-problem}}

\acknowledgements{The author thanks Philip Gibbs, whose 2014 paper proposed
curves of constant width as the test sets for this problem and explored them
numerically. The family used here is his, and the present work supplies a proof
for a choice he had already identified as the right one. The normalised placement
space and the strategy of bounding the hull area by subdivision are those of Peter
Brass and Mehrbod Sharifi, and the computation reported here sits inside the
framework they set out.}

\begin{document}
\maketitle

\begin{abstract}
  A universal cover is a convex set in the plane that contains a congruent copy of
  every planar set of diameter one. Lebesgue asked in 1914 for one of least area,
  and the value is not known. We prove that every convex universal cover has area
  at least 0.8344, improving on 0.832, published in 2005, and 0.833, in a 2026
  preprint, both of which come from a disc together with an equilateral triangle
  and a regular pentagon. Our test sets are instead curves of constant width: the
  disc, the Reuleaux triangle and the Reuleaux pentagon. Each contains the regular
  polygon it is built on, so the family is strictly larger at the same number of
  bodies and the same number of placement parameters, and we show that the
  classical configuration admits an arrangement whose hull has area below 0.8336,
  so no bound drawn from those three sets by this argument reaches ours. Curves of
  constant width were proposed for this role, and explored numerically, by Gibbs
  in 2014; what is added here is a proof. It consists of an analytic reduction
  followed by one finite computation. The reduction bounds the hull area from
  below over an entire box of placements at once, by eroding each Reuleaux polygon
  to a fixed set contained in every placement that box allows. The computation is
  an exhaustive subdivision of the resulting five-dimensional space, recorded as a
  certificate of 486,799,600 nodes and checked by a verifier independent of the
  search, with a rigorous bound on its floating point error some
  thousands of times smaller than the margin the verification attains.
\end{abstract}

\section{Introduction}
\label{sec:intro}

A planar set has diameter one when no two of its points lie farther than one
apart. Many sets have this property and they look very unlike one another. The
disc of diameter one is one of them. So is the equilateral triangle of side one.
So is the Reuleaux triangle, obtained from that triangle by replacing each side
with a circular arc centred at the opposite vertex. A convex set in the plane is
a \emph{universal cover} if every set of diameter one can be moved to fit inside
it, by a translation, a rotation, and a reflection where a reflection helps.
Lebesgue asked, in a 1914 letter to P\'al, how small the area of a universal
cover can be. The answer is not known, and the problem is recorded as open by
Brass, Moser and Pach~\cite{BrassMoserPach2005}. Convexity is part of the
question rather than a convenience: Duff~\cite{Duff1980} constructs a non-convex
universal cover of smaller area than any convex one then known, so the convex and
unrestricted variants have different answers.

The difficulty is that a single set has to serve all of them at once, and none of
them will serve as a cover for the others. Take the disc of diameter one and the
equilateral triangle of side one. Neither contains the other. The triangle's
corners lie at distance $1/\sqrt3$ from its centre, beyond the disc's radius of
$1/2$, so the triangle does not fit in the disc however it is turned; and the
largest disc that fits inside the triangle has radius $1/(2\sqrt3)$, so the disc
does not fit in the triangle either. A cover must find room for both, and for every other
set of diameter one, each in some position of its own. Write $\Lambda$ for the
least area of a universal cover, which exists by the Blaschke selection theorem.

Progress on $\Lambda$ has come from narrowing the interval that contains it, and
the two sides of that interval have quite different characters.

Upper bounds are constructive. One exhibits a cover and then removes as much of
it as can be spared. P\'al~\cite{Pal1920} starts from the regular hexagon of inradius
$1/2$, which is a universal cover, and cuts away two of its corners.
Sprague~\cite{Sprague1936} removes a further piece.
Hansen~\cite{Hansen1975} reduces it again, and later removes two very thin
slivers~\cite{Hansen1992}. The current record,
\GibbsUpper{}, is due to Gibbs~\cite{Gibbs2018}, and refines the construction
that Baez, Bagdasaryan and Gibbs~\cite{BaezBagdasaryanGibbs2015} analysed to high
precision. Each step in this line is a smaller and more delicate excision than
the one before it.

Lower bounds rest on one observation, elementary enough to state in a sentence.
Fix finitely many sets of diameter one. A universal cover contains a congruent
copy of each of them, and being convex it also contains the convex hull of the
union of those copies. The copies may sit anywhere, so the area of the cover is
at least the smallest area that hull can have, minimised over every way of
placing them. Choosing the sets, and carrying out that minimisation, gives a
bound.

The bounds obtained this way have moved much less than the upper bounds. P\'al~\cite{Pal1920}
takes a disc and an equilateral triangle, evaluates the minimum by hand, and
obtains $\PalLB$. Elekes~\cite{Elekes1994} takes a disc together with regular
$3^i$-gons and reaches approximately \ElekesLB{}. Brass and
Sharifi~\cite{BrassSharifi2005} take a disc, an equilateral triangle and a
regular pentagon; the placements of three bodies form a five-parameter family
once the redundant degrees of freedom are removed, and they minimise over it by
subdivision, obtaining
\BSLB{}. That bound has stood since 2005 and is still the best that has been
refereed. A 2026 preprint of Xie~\cite{Xie2026} reports \XieLB{} from the same three
sets.

Every one of these bounds comes from a disc together with regular polygons, and
the last two come from the same three sets, twenty-one years apart. That pattern
suggests a configuration close to exhausted, and we make the suggestion precise
rather than leaving it as an impression. The
disc, equilateral triangle and regular pentagon admit a placement whose hull has
area at most \ClassicalCeiling{} (\cref{prop:ceiling}), and a lower bound drawn
from a family can never exceed the minimum hull area of that family. So no bound
drawn from those three sets through \cref{lem:testset}, however long the
computation runs, can exceed \ClassicalCeiling{}.

Our result passes that ceiling.

\begin{theorem}
\label{thm:main}
Every convex universal cover for the planar sets of diameter one has area at
least \MainBound{}. That is, $\Lambda \ge \MainBound$.
\end{theorem}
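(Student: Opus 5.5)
The proof follows the test-set principle stated in the introduction (\cref{lem:testset}). Fix three bodies of constant width one: the disc $D$ of diameter one, the Reuleaux triangle $R_3$, and the Reuleaux pentagon $R_5$. A convex universal cover $C$ contains congruent copies of all three, so it contains the convex hull of their union. Hence
\[
  \Lambda \;\ge\; \min_{p} \operatorname{area}\,\mathrm{conv}\bigl(D \cup g_p R_3 \cup h_p R_5\bigr),
\]
where $p$ ranges over placements. The task is to show that this minimum is at least \MainBound{}.

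The first step is to normalise placements as Brass and Sharifi do. Fix the disc at the origin. Rotating the whole configuration about the disc's centre removes one rotation, so a placement is described by two translations, $t_3$ and $t_5$, and one relative angle. Together with the residual rotation of one polygon modulo its symmetry group, and reflection where it helps, this gives five parameters. I then bound the parameter domain. If a Reuleaux polygon is translated far from the disc, the hull acquires a long thin strip, and a simple estimate of the form $\operatorname{area} \ge \operatorname{area}(R)+|t|\cdot(\text{width})$ exceeds \MainBound{}. This confines $t_3$ and $t_5$ to small discs; I expect radii near \TmaxThree{} and \TmaxFive{}. It leaves a compact box in $\mathbb{R}^5$, and the symmetries confine the angles to fundamental intervals.

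The second step is the analytic reduction, which gives a lower bound valid uniformly over a box $B$ of parameters. For each Reuleaux polygon, I intersect its copies over all placements in $B$; equivalently, I erode it by the translation box and by the angular sweep. For small boxes this intersection contains an explicit polygon $E_B$, computed from the arc structure: shrink each arc inward by the maximal displacement the box allows, then inscribe a polygon. The hull area for every $p\in B$ is then at least $\operatorname{area}\,\mathrm{conv}(D'\cup E_{B,3}\cup E_{B,5})$, where $D'$ is an inscribed polygon of the disc, and this bound is a single exact polygon computation. The bound is monotone as boxes shrink and converges to the true value, so branch and bound terminates wherever the true minimum exceeds the target.

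The third step is the computation. I subdivide the five-dimensional box adaptively. A box is accepted when its eroded-hull area exceeds \MainBound{} plus a floating-point error budget $\eta$; otherwise it is bisected along its widest normalised coordinate. The resulting tree is stored as a certificate and rechecked by an independent verifier, which recomputes every leaf's bound. For rigour I bound the rounding error explicitly: point membership tests and the hull construction err by at most a small multiple of \Unit{}, and the shoelace sum is bounded by a perimeter and vertex-count estimate. This yields $\eta\approx\Eta$, and every leaf must have slack exceeding $\eta$. The main obstacle is this step, specifically making the erosion sharp enough that the tree stays finite and of feasible size, since the true minimum lies near \FiveDOptimum{} and the margin is only about $4\times10^{-4}$. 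Rotational erosion near arc vertices is the lossy part. I would first try parametrising the rotation about each polygon's centre and eroding arcs by the exact swept distance rather than a crude Lipschitz bound, then tightening the translation domain so that few boxes are spent far from the optimum.
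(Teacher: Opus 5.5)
Your proposal follows the paper's proof essentially step for step. The shared ingredients are the test-set lemma with $D$, $B_3$, $B_5$, the five-parameter gauge, and erosion of each Reuleaux polygon to $\bigcap_j\overline{D}(V_j,1-\delta)$, with $\delta$ the maximal corner displacement over the box (\cref{lem:erosion,lem:displacement}). The rest also matches: certified inner witness points, bisection along the widest normalised coordinate, a certificate checked by an independent verifier, and an explicit rounding budget $\eta$. The only real divergence is the confinement step, where the paper uses the disc-plus-centre estimate $A\ge f(|t|)$ of \cref{lem:apriori} to get $|t_3|,|t_5|\le\TmaxNaive$. Your strip estimate $\operatorname{area}(R)+|t|\cdot w$ is not immediate for two different bodies, though it becomes so with $D\cap B_n$ in place of $R$, since $\operatorname{conv}(D\cup(B_n+t))\supseteq(D\cap B_n)+[0,t]$. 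The crude corner-displacement bound already suffices, so the sharpenings you anticipate are not needed.
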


The proof separates into an analytic part and a single finite computation, and we
keep the two apart throughout.
\Cref{sec:prelim,sec:ceiling,sec:reduction} can be read without a computer. They
fix the test sets, prove the ceiling just quoted, confine the placements to a
compact box, and establish an estimate that holds simultaneously for every
placement in a box, which is what makes a finite search possible at all.
\Cref{sec:computation} carries out the search, records it as a certificate, and
describes the independent verifier that checks it.

The change that gets past the ceiling is a change of test sets, and it costs
nothing in the size of the computation. A body of constant width one has diameter
one, so bodies of constant width are admissible test sets. The Reuleaux triangle
has constant width one and contains the equilateral triangle of side one, since
it is that triangle with three arcs added. The Reuleaux pentagon stands in the
same relation to the regular pentagon of diameter one. Replacing each polygon by
the Reuleaux polygon built on it therefore enlarges every copy the cover must
contain, while leaving the number of bodies, the number of placement parameters,
and the structure of the search exactly as they were. We use the disc, the
Reuleaux triangle and the Reuleaux pentagon, all of width one.

\subsection{Provenance of the test sets}
\label{sec:contributions}

That choice of test sets is not ours, and the attribution is worth stating
precisely.
A set of diameter one is contained in a curve of constant width
one~\cite{Grunbaum1963}, so it is enough to consider bodies of constant width,
and among those the Reuleaux polygons are the easiest to construct.
Gibbs~\cite{Gibbs2014} states both points, observes that Reuleaux polygons appear
empirically to be the most effective choice for a given number of test sets, and
runs a simulated annealing search on a family of five of them, reporting
\GibbsHeuristic{}. He is careful about what that number is, calling it an upper
bound on such a lower bound rather than a bound.

That distinction is not a formality, and it is the reason this paper exists.
Write $M$ for the smallest hull area attainable by a chosen family of test sets,
over all ways of placing them, so that the observation above reads
$\Lambda\ge M$. A search of any kind
exhibits a placement, and the area of an exhibited placement is an upper bound on
$M(S)$. The quantity that has to be bounded from below is therefore bounded from
above, and the direction is wrong by construction rather than by accident. A
better optimiser does not repair it. Our own optimiser, run on the same family of
five bodies, returns \GibbsOursOnHis{}, and the placement Gibbs exhibits in 2014
already shows $M$ to be smaller than that, so our value would have been
unsound to publish as a bound.

What this paper contributes is the missing direction, together with three things
needed to obtain it.

\begin{enumerate}
\item An estimate that bounds the hull area from below simultaneously for every
      placement in a box of placements (\cref{lem:erosion}). Certified bounds of
      this kind have been obtained for the classical polygonal test sets by
      Xie~\cite{Xie2026}, who carries the witness points through a parameter
      domain by interval arithmetic. \Cref{lem:erosion} instead exhibits a fixed
      body inside every placement the box allows, which is what makes an
      exhaustive search possible for test sets bounded by circular arcs. Gibbs
      replaces his arcs by fine polygons; we do not approximate the bodies.
\item The bound of \cref{thm:main}, exhaustive over the whole placement space and
      recorded as a certificate of \Nodes{} nodes. Its verification is logically
      independent of the search: the verifier reads the certificate alone and
      re-derives every estimate it checks from the definitions, so the
      correctness of the search procedure is not part of the argument.
\item An exact ceiling for the classical family (\cref{prop:ceiling}), which
      shows the improvement to be a change of test sets rather than a longer
      search.
\end{enumerate}

\subsection{Related work on computer-assisted proofs}
\label{sec:relatedcomp}

Proofs that reduce to a finite computation over a subdivided parameter space have
a long history in discrete geometry. The four colour theorem is the standard
reference point, proved by Appel and Haken~\cite{AppelHaken1977a,AppelHaken1977b}
and reproved with a smaller machine-checked case analysis by Robertson, Sanders,
Seymour and Thomas~\cite{RobertsonEtAl1997}. The formal proof of the Kepler
conjecture given by the authors of~\cite{HalesEtAl2017} combines combinatorial
reduction with interval arithmetic over a subdivision, in a pattern close to the
one used here. Szekeres and Peters~\cite{SzekeresPeters2006} settle the
seventeen-point Erd\H{o}s--Szekeres problem by an exhaustive computation, and Hass
and Schlafly~\cite{HassSchlafly2000} prove the double bubble conjecture in a
symmetric case by a computer-assisted argument. Nearer to the present problem,
Fekete, Gurunathan, Juneja, Keldenich, Kleist and
Scheffer~\cite{FeketeEtAl2021} determine the critical density for packing squares
into a disc with a manual analysis completed by interval arithmetic over
subdivided cuboids.

Our computation differs from those in one respect worth stating early. We do not
use interval arithmetic. The quantity checked at each leaf is the area of the
convex hull of finitely many points, each of which is tested for membership in
the body it is supposed to lie in, so only two things can perturb it: a point
accepted slightly outside its body, and the rounding of the shoelace sum. Both
admit elementary bounds, given in \cref{sec:errors}, and their sum is \Eta.

\section{Preliminaries}
\label{sec:prelim}

\subsection{Test sets}

Everything in this paper rests on one lemma, and the lemma is short. A universal
cover has to contain a copy of each of our chosen sets. It is convex, so it also
contains the convex hull of those copies taken together. We do not get to choose
where the copies sit, so we must take the worst arrangement, and even the worst
arrangement gives a bound.

\begin{definition}
\label{def:cover}
The \emph{diameter} of a bounded set $T\subset\mathbb{R}^2$ is
$\operatorname{diam} T=\sup\{\,|x-y| : x,y\in T\,\}$. A convex set
$K\subset\mathbb{R}^2$ is a \emph{universal cover} if for every $T$ with
$\operatorname{diam} T\le 1$ there is a rigid motion $g$ of the plane, orientation
reversing if required, with $g(T)\subseteq K$. We write $\Lambda$ for the infimum
of $\operatorname{area}(K)$ over all universal covers $K$.
\end{definition}

\begin{definition}
\label{def:M}
For sets $T_1,\dots,T_k$ of diameter at most one, put
\[
  M(T_1,\dots,T_k)
  \;=\;
  \inf_{g_1,\dots,g_k}\;
  \operatorname{area}\Bigl(\operatorname{conv}\bigl(\textstyle\bigcup_{i}g_i(T_i)\bigr)\Bigr),
\]
the infimum running over all $k$-tuples of rigid motions of the plane. Where the
family is clear we write simply $M$.
\end{definition}

\begin{lemma}[Test sets]
\label{lem:testset}
For any sets $T_1,\dots,T_k$ of diameter at most one,
$\Lambda \ge M(T_1,\dots,T_k)$.
\end{lemma}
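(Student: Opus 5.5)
The plan is to bound the area of an arbitrary universal cover from below by $M$, and then pass to the infimum over covers. Fix a universal cover $K$ in the sense of \cref{def:cover}; it suffices to show $\operatorname{area}(K)\ge M(T_1,\dots,T_k)$, because $\Lambda$ is the infimum of these areas and $M(T_1,\dots,T_k)$ does not depend on $K$.

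\textbf{Step 1.} Since each $T_i$ has diameter at most one, \cref{def:cover} supplies a rigid motion $g_i$, possibly orientation reversing, with $g_i(T_i)\subseteq K$. Doing this for each $i$ gives a $k$-tuple $(g_1,\dots,g_k)$ with $\bigcup_i g_i(T_i)\subseteq K$.

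\textbf{Step 2.} $K$ is convex and contains this union. The convex hull is the smallest convex set containing a given set, so
\[
  \operatorname{conv}\Bigl(\textstyle\bigcup_i g_i(T_i)\Bigr)\subseteq K .
\]

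\textbf{Step 3.} By monotonicity of Lebesgue measure, $\operatorname{area}(K)$ is at least the area of this hull. The hull area is one of the values over which the infimum in \cref{def:M} is taken. That infimum runs over all $k$-tuples of rigid motions, and we read this as including orientation-reversing ones, matching \cref{def:cover}. So the hull area is at least $M$, which gives $\operatorname{area}(K)\ge M$.

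\textbf{Step 4.} Take the infimum over all universal covers $K$ to conclude $\Lambda\ge M(T_1,\dots,T_k)$.

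I expect no real obstacle; the only points needing care are bookkeeping.

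\emph{Which motions $M$ ranges over.} If $M$ were defined over orientation-preserving motions only, the inequality in Step 3 could fail for a cover that needs a reflection. So I would state explicitly that "rigid motion" in \cref{def:M} carries the same meaning as in \cref{def:cover}. For the bodies used later (disc and Reuleaux polygons, all reflection symmetric) the two readings coincide anyway.

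\emph{Measurability.} The area of $K$ must make sense. A convex set is Lebesgue measurable, since its boundary has measure zero, so this is fine.

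\emph{Unbounded covers.} If $K$ is unbounded or has infinite area, the inequality is trivial.
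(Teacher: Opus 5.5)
Your proof is correct and follows the paper's own argument step for step: pick motions $g_i$ with $g_i(T_i)\subseteq K$, use convexity of $K$ to contain the hull, observe that this tuple competes in the infimum defining $M$, and pass to the infimum over covers. Your remark that the motions in \cref{def:M} must be allowed to reverse orientation, as in \cref{def:cover}, is a sensible point that the paper leaves implicit.
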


\begin{proof}
Let $K$ be a universal cover. Each $T_i$ has diameter at most one, so by
\cref{def:cover} there is a rigid motion $g_i$ with $g_i(T_i)\subseteq K$. Then
$\bigcup_i g_i(T_i)\subseteq K$, and $K$ is convex, so
$\operatorname{conv}(\bigcup_i g_i(T_i))\subseteq K$. Taking areas,
\[
  \operatorname{area}(K)
  \;\ge\;
  \operatorname{area}\Bigl(\operatorname{conv}\bigl(\textstyle\bigcup_i g_i(T_i)\bigr)\Bigr)
  \;\ge\; M(T_1,\dots,T_k),
\]
the second inequality because the particular motions $g_i$ compete in the
infimum of \cref{def:M}. The bound holds for every universal cover $K$, hence for
the infimum of their areas.
\end{proof}

The two directions are not symmetric. A lower bound on $M$ controls every tuple
of motions at once; an upper bound needs only one, since any single tuple competes
in the infimum. \Cref{sec:ceiling} uses the easy direction and
\cref{sec:reduction,sec:computation} the hard one.

\subsection{Constant width}

Larger test sets give larger hulls, so we want each $T_i$ to be as large as a
diameter-one set can be. Width is the right way to see which sets those are. The
width of a convex body in a given direction is the distance between its two
supporting lines perpendicular to that direction, and the diameter turns out to
be the largest width over all directions. A body whose width is the same in every
direction therefore has its diameter equal to that common width, which is as
large as the body can be relative to how far it extends.

For a convex body $K$ write $h_K(\theta)=\max_{x\in K}\langle x,u(\theta)\rangle$
for its support function, where $u(\theta)=(\cos\theta,\sin\theta)$, and
$w_K(\theta)=h_K(\theta)+h_K(\theta+\pi)$ for its width in direction $\theta$.
$K$ has \emph{constant width} $w$ if $w_K\equiv w$.

\begin{lemma}
\label{lem:width}
A convex body $K$ satisfies
$\operatorname{diam} K=\max_{\theta}w_K(\theta)$. In particular a body of constant
width $w$ has diameter $w$.
\end{lemma}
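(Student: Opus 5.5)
We prove the two inequalities $\operatorname{diam}K\ge\max_\theta w_K(\theta)$ and $\operatorname{diam}K\le\max_\theta w_K(\theta)$ separately, using only the definitions of $h_K$ and $w_K$. Compactness of $K$ ensures the relevant maxima and suprema are attained. First, $h_K$ is continuous: it is a maximum of the continuous functions $\theta\mapsto\langle x,u(\theta)\rangle$ over the compact set $K$, and in fact it is Lipschitz with constant $\max_{x\in K}|x|$. Hence $w_K$ is continuous on the circle and $\max_\theta w_K(\theta)$ exists. Likewise $|x-y|$ is continuous on $K\times K$, so the diameter is attained by some pair $x,y\in K$.

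For $\max_\theta w_K\le\operatorname{diam}K$, fix $\theta$. Choose $x\in K$ attaining $h_K(\theta)=\langle x,u(\theta)\rangle$ and $y\in K$ attaining $h_K(\theta+\pi)=\langle y,-u(\theta)\rangle$. Then
\[
w_K(\theta)=\langle x-y,u(\theta)\rangle\le|x-y|\le\operatorname{diam}K
\]
by Cauchy--Schwarz.

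For $\operatorname{diam}K\le\max_\theta w_K$, take $x,y\in K$ with $|x-y|=\operatorname{diam}K$. If this is $0$ there is nothing to prove. Otherwise let $\theta$ be the direction of $x-y$, so that $x-y=|x-y|\,u(\theta)$. Since $h_K(\theta)\ge\langle x,u(\theta)\rangle$ and $h_K(\theta+\pi)\ge\langle y,-u(\theta)\rangle$, adding the two gives
\[
w_K(\theta)\ge\langle x-y,u(\theta)\rangle=|x-y|=\operatorname{diam}K.
\]
Combining the two inequalities gives $\operatorname{diam}K=\max_\theta w_K(\theta)$. For a body of constant width $w$, the function $w_K$ is identically $w$, so its maximum is $w$ and hence $\operatorname{diam}K=w$.

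No step here is a real obstacle. The only points that need care are that the maxima are attained, which follows from compactness and continuity as above, and the degenerate case $x=y$, handled separately. Convexity of $K$ is not actually used; only compactness is.
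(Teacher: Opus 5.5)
Your proof is correct and matches the paper's argument essentially line for line. Both directions compare $\langle x-y,u(\theta)\rangle$ with $h_K(\theta)+h_K(\theta+\pi)$, and compactness is used to choose the supporting points; your added remarks that $w_K$ is continuous, so its maximum is attained, and that convexity is never used are accurate and make explicit what the paper leaves implicit.
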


\begin{proof}
For $x,y\in K$ with $x\neq y$ set $u=(x-y)/|x-y|$ and let $\theta$ be its angle.
Then $|x-y|=\langle x,u\rangle-\langle y,u\rangle\le
h_K(\theta)+h_K(\theta+\pi)=w_K(\theta)$, so
$\operatorname{diam} K\le\max_\theta w_K(\theta)$. Conversely fix $\theta$ and
choose $x,y\in K$ with $\langle x,u(\theta)\rangle=h_K(\theta)$ and
$\langle y,u(\theta+\pi)\rangle=h_K(\theta+\pi)$, which exist because $K$ is
compact. Then
$|x-y|\ge\langle x-y,u(\theta)\rangle=w_K(\theta)$, so
$\operatorname{diam} K\ge w_K(\theta)$ for every $\theta$.
\end{proof}

\begin{remark}
\label{rem:completion}
Every set of diameter at most one is contained in a body of constant width
one~\cite{Grunbaum1963}, so the bodies of constant width one are exactly the
maximal test sets and nothing is lost by restricting to them. Martini, Montejano
and Oliveros~\cite{MartiniMontejanoOliveros2019} give a comprehensive account of
such bodies, the completions among them included. We record this
because it is why constant width is the right family to look in. The proof of
\cref{thm:main} does not use it. All that is required there is
\cref{lem:width}, which certifies that the particular bodies we choose have
diameter one and are therefore admissible in \cref{lem:testset}.
\end{remark}

\subsection{Reuleaux polygons}

The Reuleaux polygons are the constant-width bodies that are easiest to write
down. Take an odd number of points, each pair at distance at most one, and
intersect the unit discs centred at them. For the vertices of a regular polygon of
diameter one this produces the familiar rounded shape bounded by circular arcs,
and it contains the polygon it was built from.
\Cref{fig:testsets} shows the three bodies we use and the containment.

\begin{figure}[t]
\centering
\includegraphics{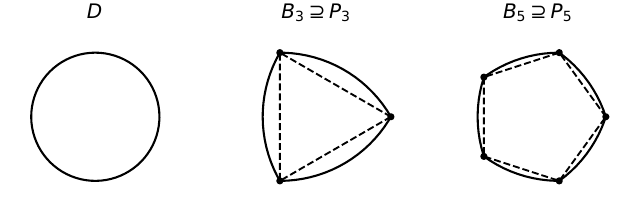}
\caption{The test sets of \cref{def:family}, drawn to a common scale. Left, the
disc $D$ of diameter one. Centre and right, the Reuleaux polygons $B_3$ and $B_5$
of width one (solid) with the regular polygons $P_3$ and $P_5$ of diameter one on
the same corners (dashed), the corners marked. The containment is
\cref{lem:contains}.}
\label{fig:testsets}
\end{figure}

\begin{definition}
\label{def:reuleaux}
Let $n\ge3$ be odd and let $V_1,\dots,V_n$ be the vertices of a regular $n$-gon of
diameter one, so $|V_i-V_j|\le1$ for all $i,j$. The \emph{regular Reuleaux
$n$-gon} of width one is
$B_n=\bigcap_{j=1}^{n}\overline{D}(V_j,1)$, where $\overline{D}(c,r)$ is the
closed disc of centre $c$ and radius $r$. We call $V_1,\dots,V_n$ its
\emph{corners}, and write $R_n=|V_j|$ for their common distance from the centroid,
so that $R_3=\CircThree$ and $R_5=\CircFive$.
\end{definition}

$B_n$ has constant width one, which is classical~\cite{Grunbaum1963,KellyWeiss1979}.
By \cref{lem:width} it has diameter one, so it is admissible in
\cref{lem:testset}.

\begin{lemma}
\label{lem:contains}
$B_n$ contains the regular $n$-gon $P_n=\operatorname{conv}\{V_1,\dots,V_n\}$ of
diameter one on the same corners.
\end{lemma}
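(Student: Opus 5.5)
The plan is to reduce the containment to the convexity of $B_n$ together with the fact that every corner lies in $B_n$. By \cref{def:reuleaux}, $B_n$ is an intersection of closed discs. Each closed disc is convex, and so is any intersection of convex sets, so $B_n$ is convex. The convex hull of a set is the smallest convex set containing it. Hence it suffices to show that $V_i\in B_n$ for every $i$, since then $P_n=\operatorname{conv}\{V_1,\dots,V_n\}\subseteq B_n$.

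To show $V_i\in B_n$, I fix $i$ and check that $V_i\in\overline{D}(V_j,1)$ for every $j$, that is, $|V_i-V_j|\le 1$. For $j=i$ this is trivial. For $j\ne i$ it is exactly the diameter-one hypothesis recorded in \cref{def:reuleaux}: the regular $n$-gon has diameter one, so every pair of its vertices is at distance at most one.

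If one wants this bound stated concretely rather than taken as part of the definition, the distance between $V_i$ and $V_j$ on the circumscribed circle of radius $R_n$ is $2R_n\sin(\pi k/n)$, where $k$ is the index gap. This is maximised at $k=(n-1)/2$, the longest diagonal. The normalisation $R_n=1/(2\sin(\pi(n-1)/(2n)))=1/(2\cos(\pi/(2n)))$ makes that maximum equal to one. For $n=3$ this gives $R_3=1/\sqrt3$, and for $n=5$ it gives $R_5\approx0.5257$, matching the values in \cref{def:reuleaux}.

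There is no real obstacle here. The only point needing care is to use convexity of $B_n$, rather than attempt a direct argument about edges lying under arcs. With convexity, membership of the corners is all that is required.
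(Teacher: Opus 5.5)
Your proof is correct and follows the paper's argument: each corner lies in every disc $\overline{D}(V_j,1)$ because pairwise distances are at most one, and the convexity of $B_n$, as an intersection of discs, then gives $P_n\subseteq B_n$. Your computation $R_n=1/(2\cos(\pi/(2n)))$ is also correct, but the proof does not need it, since the diameter-one condition is already part of \cref{def:reuleaux}.
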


\begin{proof}
Fix $i$. For every $j$ we have $|V_i-V_j|\le1$, so $V_i\in\overline{D}(V_j,1)$,
and therefore $V_i\in B_n$. Thus $B_n$ contains every corner, and $B_n$ is convex
as an intersection of discs, so it contains their convex hull $P_n$.
\end{proof}

\Cref{lem:contains} is the whole of the improvement, stated once. Enlarging a
test set can only enlarge the hull of any arrangement containing it, so replacing
$P_n$ by $B_n$ increases $M$, or leaves it unchanged, for every family in which
$P_n$ appears. It does so without changing the number of test sets or the number
of parameters describing their placements.

\begin{definition}
\label{def:family}
Our test sets are $D=\overline{D}(0,\tfrac12)$, the closed disc of diameter one
centred at the origin, together with $B_3$ and $B_5$. The classical family is $D$ together with $P_3$ and $P_5$, the equilateral
triangle of side one and the regular pentagon of diameter one.
\end{definition}

\subsection{Support functions and hulls}

Hulls of unions are computed through support functions in
\cref{sec:ceiling,sec:computation}, and one identity carries all of it.

\begin{lemma}
\label{lem:support}
For compact convex sets $K_1,\dots,K_m$,
$h_{\operatorname{conv}(\bigcup_i K_i)}(\theta)=\max_i h_{K_i}(\theta)$ for every
$\theta$. If $K$ is translated by $t$ and rotated by $\rho$ about the origin then
its support function becomes $\theta\mapsto h_K(\theta-\rho)+\langle
t,u(\theta)\rangle$.
\end{lemma}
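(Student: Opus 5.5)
The plan is to prove the two assertions of \cref{lem:support} separately, each directly from the definition $h_K(\theta)=\max_{x\in K}\langle x,u(\theta)\rangle$. The only tool needed is the elementary fact that a linear functional attains its maximum over a convex hull of a compact set at a point of that set.

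For the first identity, write $C=\operatorname{conv}(\bigcup_i K_i)$ and fix $\theta$. The lower inequality $h_C(\theta)\ge\max_i h_{K_i}(\theta)$ is immediate, since each $K_i\subseteq C$ and the maximum over a larger set is at least as large. For the upper inequality, take any $x\in C$. By definition of the convex hull, $x=\sum_j\lambda_j y_j$ is a finite convex combination with each $y_j$ in some $K_{i(j)}$. Linearity of $\langle\cdot,u(\theta)\rangle$ then gives
\[
\langle x,u(\theta)\rangle=\sum_j\lambda_j\langle y_j,u(\theta)\rangle\le\sum_j\lambda_j\max_i h_{K_i}(\theta)=\max_i h_{K_i}(\theta).
\]
One point needs care: the maximum defining $h_C$ should be attained. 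Here $C$ is compact, being the convex hull of the compact set $\bigcup_i K_i$ (a finite union), and convex hulls of compact sets in $\mathbb{R}^2$ are compact by Carath\'eodory's theorem. The supremum over $C$ is therefore a maximum, and the bound above, which holds for every $x\in C$, applies to it.

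For the transformation rule, I read the motion as rotating by $\rho$ about the origin and then translating by $t$, so the image is $K'=\{R_\rho x+t : x\in K\}$. This is the order consistent with the stated formula. For $x\in K$ we have
\[
\langle R_\rho x+t,u(\theta)\rangle=\langle x,R_{-\rho}u(\theta)\rangle+\langle t,u(\theta)\rangle=\langle x,u(\theta-\rho)\rangle+\langle t,u(\theta)\rangle,
\]
using that $R_\rho$ is orthogonal with transpose $R_{-\rho}$, and that $R_{-\rho}u(\theta)=u(\theta-\rho)$. Maximising over $x\in K$ gives $h_{K'}(\theta)=h_K(\theta-\rho)+\langle t,u(\theta)\rangle$.

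Neither step is a genuine obstacle. The only points that need stating are compactness of the hull, so that the maximum in the definition is attained, and the order of rotation and translation. If the translation were applied first, the formula would instead read $\langle R_\rho t,u(\theta)\rangle$ in place of $\langle t,u(\theta)\rangle$. I would fix the rotate-then-translate convention explicitly in the proof so that the formula holds exactly as stated.
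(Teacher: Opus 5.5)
Your proof is correct. For the transformation rule it is exactly the computation the paper calls immediate from the definition. Fixing rotate-then-translate is the right reading: it is the convention of \cref{def:placement}, where $B_5$ is rotated about its centre (the origin) and then translated. Your remark that the opposite order would produce $\langle\Rot_\rho t,u(\theta)\rangle$ is also accurate.

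For the hull identity you take a different route from the paper. The paper argues through extreme points. A linear functional attains its maximum over the compact convex set $\operatorname{conv}(\bigcup_i K_i)$ at an extreme point, and every extreme point of a convex hull lies in the set that generates it. That is a one-line argument, but it rests on Minkowski's theorem (finite-dimensional Krein--Milman) and on the compactness of the hull.

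You instead write an arbitrary hull point as a finite convex combination and bound it by linearity. This needs only the definition of the convex hull, so your argument is self-contained and more elementary. Your Carath\'eodory step is in fact unnecessary. Your two inequalities already show that the supremum over the hull equals $\max_i h_{K_i}(\theta)$. That value is attained at a maximiser in the appropriate $K_i$, which lies in the hull, so the maximum in the definition of the support function exists without any appeal to compactness of the hull.
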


\begin{proof}
A linear functional attains its maximum over $\operatorname{conv}(\bigcup_i K_i)$
at an extreme point of that hull, and every extreme point lies in some $K_i$, so
the maximum equals $\max_i h_{K_i}(\theta)$. The second statement is immediate
from the definition.
\end{proof}

\section{The classical family and its ceiling}
\label{sec:ceiling}

Bounding $M$ from above takes no search. Any single arrangement of the test sets
competes in the infimum of \cref{def:M}, so exhibiting one arrangement and
measuring its hull settles the matter. We therefore overestimate the hull
deliberately, by intersecting the half-planes that support
it in finitely many directions. That region contains the hull whatever the
directions are, so its area is an upper bound with no approximation argument
attached.

\begin{lemma}
\label{lem:outer}
Let $K$ be a compact convex set and let $\theta_1,\dots,\theta_N$ be any
directions. Then
\[
  K \;\subseteq\; Q \;=\;
  \bigl\{\,x\in\mathbb{R}^2 : \langle x,u(\theta_j)\rangle\le h_K(\theta_j)
  \text{ for } j=1,\dots,N \,\bigr\},
\]
and hence $\operatorname{area}(K)\le\operatorname{area}(Q)$.
\end{lemma}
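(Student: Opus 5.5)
The inclusion $K\subseteq Q$ follows directly from the definition of the support function, and the area inequality is then monotonicity of area, since $Q$ is closed and hence measurable.

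Take any $x\in K$ and any index $j$. By the definition $h_K(\theta)=\max_{y\in K}\langle y,u(\theta)\rangle$, the point $x$ is one of the competitors in that maximum, so
\[
  \langle x,u(\theta_j)\rangle\le h_K(\theta_j).
\]
This holds for each $j=1,\dots,N$, so $x$ satisfies every defining inequality of $Q$ and lies in $Q$. Hence $K\subseteq Q$. Compactness of $K$ is used only to make $h_K(\theta_j)$ a finite maximum, so that each constraint defines a genuine closed half-plane.

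For the area statement, observe that $Q$ is a finite intersection of closed half-planes, so it is a closed convex set and in particular Lebesgue measurable. Monotonicity of Lebesgue measure under inclusion then gives $\operatorname{area}(K)\le\operatorname{area}(Q)$.

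The one point needing care is that $Q$ may be unbounded, for instance when all the $\theta_j$ lie in an open half-circle. In that case $\operatorname{area}(Q)=+\infty$. The inequality still holds in the extended reals and is simply vacuous.

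Nothing here is a real obstacle. The lemma is used in \cref{sec:ceiling} with directions spread around the circle, which makes $Q$ a bounded polygon whose area is finite and computable by the shoelace formula. There $h_K(\theta_j)$ for the hull of the placed bodies is obtained from \cref{lem:support}.
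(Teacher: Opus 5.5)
Your proof is correct and matches the paper's argument: each $x\in K$ competes in the maximum defining $h_K(\theta_j)$, so it satisfies every constraint of $Q$, and area is monotone under inclusion. Your extra remarks, that $Q$ is measurable and that $\operatorname{area}(Q)$ may be infinite when the directions are not spread around the circle, are accurate points the paper leaves implicit.
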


\begin{proof}
If $x\in K$ then $\langle x,u(\theta_j)\rangle\le h_K(\theta_j)$ for every $j$, by
the definition of the support function. So $x\in Q$. Area is monotone under
inclusion.
\end{proof}

Taking the $\theta_j$ equally spaced makes $Q$ a convex polygon whose vertices are
the intersections of consecutive constraint lines, and its area follows from the
shoelace formula. Increasing $N$ shrinks $Q$, so the bound improves, and every
value obtained along the way remains valid.

\begin{proposition}
\label{prop:ceiling}
$M(D,P_3,P_5) \le \ClassicalCeilingSafe$.
\end{proposition}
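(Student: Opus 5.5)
The plan is to exhibit one explicit placement of $D$, $P_3$ and $P_5$ and bound the area of its hull from above by \cref{lem:outer}. By \cref{def:M}, any single triple of motions competes in the infimum, so no search or lower-bound argument is needed. Fix $D$ at the origin. Place $P_3$ with centroid translated to $t_3=\ClassicalTthree$ in its standard orientation, and place $P_5$ with centroid at $t_5=\ClassicalTfive$, rotated by $\rho=\ClassicalRho$ relative to the triangle. These are the numerically optimised values found by a local minimiser of the hull area; any nearby values that still produce an area below \ClassicalCeilingSafe{} would serve equally well.

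Next, write down the support function of the hull. By \cref{lem:support},
\[
h(\theta)=\max\bigl\{\tfrac12,\; h_{P_3}(\theta)+\langle t_3,u(\theta)\rangle,\; h_{P_5}(\theta-\rho)+\langle t_5,u(\theta)\rangle\bigr\}.
\]
Each polygon support function is exactly a maximum over its vertices, namely $h_{P_n}(\theta)=\max_j\langle V_j,u(\theta)\rangle$. So $h$ can be evaluated in closed form at any direction. Choose $N=\CeilingDirs$ equally spaced directions $\theta_j$, form the outer polygon $Q$ of \cref{lem:outer}, compute its vertices as the intersections of consecutive constraint lines, and apply the shoelace formula. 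This gives $\operatorname{area}(Q)\approx\ClassicalOuterC$. Coarser $N$ give the larger but still valid values \ClassicalOuterA{} and \ClassicalOuterB{}, which serve as a consistency check that the bound is decreasing as it should.

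The main obstacle is rigor in the final numerical comparison, not the geometry. The computed area \ClassicalCeiling{} sits only about $4\times10^{-7}$ below \ClassicalCeilingSafe{}, and the computation sums a million shoelace terms in floating point. I would handle this in one of two ways. The first is to redo the evaluation in high-precision or interval arithmetic, with the vertex coordinates known only to bounded error. Since every term is $O(1)$, a crude forward error bound of order $N\cdot 2^{-53}$ is far below the gap of $4\times10^{-7}$. The second, alternative route is to use a much smaller $N$, say a few thousand, chosen so that the exact area is still below \ClassicalCeilingSafe{}. That area can then be verified with exact rational bounds on $\cos$ and $\sin$.

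One subtlety needs checking. Consecutive constraint lines must actually bound $Q$, that is, no constraint may be redundant in a way that breaks the consecutive-intersection description. This is automatic when the spacing $2\pi/N$ is less than $\pi$ and $h$ is a genuine support function, because every constraint line is then a supporting line of the convex hull. Even if it failed, the polygon produced by consecutive intersections would still contain $Q$, so the resulting area remains a valid upper bound and the inequality $M(D,P_3,P_5)\le\ClassicalCeilingSafe$ follows.
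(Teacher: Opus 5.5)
Your proposal is correct and is essentially the paper's proof: the same explicit placement, the hull's support function as a pointwise maximum by \cref{lem:support}, and \cref{lem:outer} with $N=\CeilingDirs$ equally spaced directions giving $\ClassicalOuterC<\ClassicalCeilingSafe$. Your observation that every constraint line supports the hull supplies the justification the paper leaves implicit for reading off the vertices of $Q$ as intersections of consecutive lines. Your side remarks need repair, though. First, the margin is about $2.1\times10^{-6}$, not $4\times10^{-7}$. Second, the $O(N\cdot2^{-53})$ rounding bound holds because the cross terms $x_iy_{i+1}-x_{i+1}y_i$ are nonnegative (the origin lies in $D\subseteq Q$) and sum to $2\operatorname{area}(Q)$; $N$ terms merely of size $O(1)$ give a worst case of order $N^2\cdot2^{-53}\approx10^{-4}$, which exceeds the margin. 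Third, a few thousand equally spaced directions will not do: the excess along the straight edges of the hull decays only like $1/N$, and already $N=10^4$ gives \ClassicalOuterA{}. Fourth, your fallback is false: a redundant constraint makes the consecutive-intersection polygon self-intersect with a clockwise loop, so its shoelace area falls below $\operatorname{area}(Q)$. This is harmless only because your main argument rules that case out.
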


\begin{proof}
Place $P_3$ by the translation $\ClassicalTthree$ and $P_5$ by the rotation
$\ClassicalRho$ followed by the translation $\ClassicalTfive$, leaving $D$
centred at the origin. By
\cref{lem:support} the support function of the hull of the union is the pointwise
maximum of the three support functions, each of which is elementary. Applying
\cref{lem:outer} with $N=\CeilingDirs$ equally spaced directions gives an outer
polygon of area at most \ClassicalOuterC{}, the constants here and below being
rounded upwards from the computed values. This arrangement competes in the
infimum of \cref{def:M}, so
$M(D,P_3,P_5)\le\ClassicalOuterC<\ClassicalCeilingSafe$.
\end{proof}

\begin{remark}
\label{rem:provenance}
How the arrangement was found does not enter the proof. We located it by
numerical minimisation, but any arrangement would give a valid bound, and the one
quoted is checked directly. For the same reason the constant is stated with room
to spare: the outer polygon areas at $N=10^4$, $10^5$ and $\CeilingDirs$ are
\ClassicalOuterA{}, \ClassicalOuterB{} and \ClassicalOuterC{}, a decreasing
sequence of valid upper bounds, and rounding to \ClassicalCeilingSafe{} puts the
statement far above any question about the arithmetic.
\end{remark}

\begin{corollary}
\label{cor:exhausted}
Every lower bound on $\Lambda$ obtained from the test sets $D$, $P_3$ and $P_5$
through \cref{lem:testset} is at most \ClassicalCeilingSafe{}, and in particular
smaller than \MainBound{}.
\end{corollary}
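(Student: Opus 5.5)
The corollary follows from \cref{prop:ceiling} once we recall exactly what a lower bound obtained through \cref{lem:testset} is. Such an argument bounds $\Lambda$ from below by a number $\beta$ that is itself a certified lower bound on $M(D,P_3,P_5)$. More precisely, the lemma gives $\Lambda\ge M(D,P_3,P_5)$, and any computation, however refined, can only establish $\beta\le M(D,P_3,P_5)$. I would therefore state that any valid $\beta$ produced this way satisfies $\beta\le M(D,P_3,P_5)$. This is the sense in which a bound is ``drawn from'' the family. It is the same asymmetry noted after \cref{lem:testset}.

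The second step chains this with \cref{prop:ceiling}: $\beta\le M(D,P_3,P_5)\le\ClassicalOuterC<\ClassicalCeilingSafe$. That proposition exhibits one explicit arrangement and bounds its hull area from above. It does so via the outer polygon of \cref{lem:outer}, with the support function computed from \cref{lem:support}. Since the arrangement competes in the infimum of \cref{def:M}, it bounds $M$ from above. The final step is the numerical comparison $\ClassicalCeilingSafe<\MainBound$, which is immediate.

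There is no real obstacle here. The only point needing care is to make explicit that ``obtained through \cref{lem:testset}'' means a lower bound on $M$, and not merely some number that happens to lie below $\Lambda$. Without that reading the statement would be false, since trivially any smaller number is also a lower bound. So I would phrase the corollary's proof as bounding every certified lower bound on $M(D,P_3,P_5)$. All the analytic and numerical content is already contained in \cref{prop:ceiling}.
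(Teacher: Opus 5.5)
Your proposal is correct and matches the paper's proof: \cref{lem:testset} applied to $D,P_3,P_5$ yields only $\Lambda\ge M(D,P_3,P_5)$, so any bound drawn from it is at most $M(D,P_3,P_5)\le\ClassicalCeilingSafe<\MainBound$ by \cref{prop:ceiling}. The paper's one-line proof leaves implicit that ``obtained through \cref{lem:testset}'' means a lower bound on $M$, not any number below $\Lambda$; your making that reading explicit is a useful clarification.
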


\begin{proof}
\Cref{lem:testset} applied to these three sets yields the single bound
$\Lambda\ge M(D,P_3,P_5)$, and \cref{prop:ceiling} bounds that quantity.
\end{proof}

These are the test sets of Brass and Sharifi~\cite{BrassSharifi2005} and of
Xie~\cite{Xie2026}, and P\'al's pair is contained in them.
\Cref{cor:exhausted} therefore settles what the classical configuration can
deliver. The published values \BSLB{} and \XieLB{} are already within
$1.6\times10^{-3}$ and $6\times10^{-4}$ of its ceiling, and no amount of further
computation on those sets reaches \MainBound{}. Passing that ceiling requires
different test sets, which is what \cref{sec:reduction,sec:computation} do.

The same computation applied to our own test sets gives
$M(D,B_3,B_5)\le\ReuleauxOuterC$, from the arrangement
\[
  \ReuleauxWitness.
\]
That number is not needed for \cref{thm:main}. It records how much the family we
do use still has in reserve, and \cref{sec:scope} returns to the cost of
extracting it.

\section{Reduction to a finite computation}
\label{sec:reduction}

\Cref{lem:testset} turns \cref{thm:main} into a statement about
$M(D,B_3,B_5)$, an infimum over an unbounded family of triples of rigid motions.
Three steps make that finite. Symmetry removes the redundant parameters, an area
estimate confines the rest to a compact box, and an erosion estimate bounds the
hull area from below over a whole box of placements at once. Only the third is
specific to this problem.

\subsection{The gauge}

The area of a hull does not change when the whole arrangement is moved, so some
of the nine parameters describing three rigid motions carry no information. A
global translation places $D$ at the origin, and $D$ is invariant under rotation
about its centre, so $D$ contributes nothing. A global rotation then fixes the
orientation of $B_3$. What remains is the translation of $B_3$, and the rotation
and translation of $B_5$. We call this normalisation \emph{fixing the gauge},
and it is what reduces nine parameters to five.

\begin{definition}
\label{def:placement}
A \emph{placement} is a triple $(t_3,\rho,t_5)\in\mathbb{R}^2\times\mathbb{R}
\times\mathbb{R}^2$, standing for the arrangement of $D$ centred at the origin,
$B_3$ translated by $t_3$, and $B_5$ rotated by $\rho$ about its centre and then
translated by $t_5$. Here $\Rot_\rho$ denotes the rotation of the plane through
angle $\rho$ about the origin, written this way so that it is not confused with
the circumradius $R_n$ of \cref{def:reuleaux}. Write
\[
  A(t_3,\rho,t_5)=\operatorname{area}
  \bigl(\operatorname{conv}(D\cup(B_3+t_3)\cup(\Rot_\rho B_5+t_5))\bigr).
\]
\end{definition}

\begin{definition}
\label{def:box}
A \emph{box of placements} is a product of five closed intervals, one for each
coordinate of \cref{def:placement}, written through its centre and half-widths as
\[
  \mathcal{B}
  =\prod\bigl[c-h,\;c+h\bigr]
  \quad\text{over the coordinates } t_{3x},t_{3y},\rho,t_{5x},t_{5y},
\]
with half-widths $h_{3x},h_{3y},h_\rho,h_{5x},h_{5y}\ge0$. We abbreviate
$h_x,h_y$ for the two translation half-widths of whichever body is under
discussion.
\end{definition}

\begin{lemma}
\label{lem:gauge}
$M(D,B_3,B_5)=\inf\{A(t_3,\rho,t_5)\}$, the infimum over
$t_3,t_5\in\mathbb{R}^2$ and $\rho\in[0,2\pi/5)$.
\end{lemma}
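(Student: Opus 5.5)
I will prove the two inequalities separately. The direction $M(D,B_3,B_5)\le\inf A$ is immediate: every placement $(t_3,\rho,t_5)$ describes a triple of rigid motions, namely the identity for $D$, translation by $t_3$ for $B_3$, and $x\mapsto\Rot_\rho x+t_5$ for $B_5$. These motions compete in the infimum of \cref{def:M}, so $M\le A(t_3,\rho,t_5)$ for every placement. Restricting $\rho$ to $[0,2\pi/5)$ only shrinks the set over which the right-hand side is taken, which makes this inequality weaker and therefore still true.

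For the reverse inequality, I will show that every triple $(g_1,g_2,g_3)$ of rigid motions gives a hull whose area equals $A(t_3,\rho,t_5)$ for some placement with $\rho\in[0,2\pi/5)$. First I remove reflections. $D$, $B_3$ and $B_5$ are each symmetric under reflection in a line through their centre: any line through the centre for $D$, and a symmetry axis through a corner and the centroid for $B_n$. So $g_i(T_i)=g_i'(T_i)$ for some orientation-preserving $g_i'$, because one can precompose with that reflection. Next I apply one global direct isometry $\phi$, which preserves area and commutes with taking convex hulls, $\phi(\operatorname{conv}X)=\operatorname{conv}\phi(X)$. I choose $\phi$ so that $\phi\circ g_1'$ maps the centre of $D$ to the origin, and so that the rotational part of $\phi\circ g_2'$ is the identity. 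This is possible because first I pick the rotation to cancel the rotational part of $g_2'$, and then the translation to recentre $D$. Since $D$ is rotation invariant about its centre, $\phi g_1'(D)=D$, and $\phi g_2'(B_3)=B_3+t_3$ for some $t_3$.

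The remaining body satisfies $\phi g_3'(B_5)=\Rot_{\rho'}B_5+t_5$ for some $\rho'\in\mathbb{R}$ and $t_5$. Here I use that $B_5$ is centred at the origin, so writing a direct isometry as a rotation about the centre of $B_5$ followed by a translation is legitimate. Because $\Rot_{2\pi/5}$ permutes the corners $V_j$, it maps $B_5=\bigcap_j\overline{D}(V_j,1)$ to itself. Hence I may replace $\rho'$ by $\rho'$ reduced modulo $2\pi/5$ into $[0,2\pi/5)$ without changing the set. This gives the placement $(t_3,\rho,t_5)$, and the hull area of the original triple equals $A(t_3,\rho,t_5)$. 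Taking infima proves $\inf A\le M$.

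The only point needing care is the symmetry bookkeeping: that reflections are absorbed, and that the rotation of $B_3$ can be gauged away while $D$ stays fixed. Both follow from the corner-and-axis symmetry of the regular Reuleaux polygons, which I will check directly from \cref{def:reuleaux}: a symmetry of the regular $n$-gon permutes the $V_j$, so it preserves the intersection of the discs. I expect no real obstacle. The main thing to state explicitly is that the centroid of $B_n$ is the origin, so that rotations "about its centre" in \cref{def:placement} agree with $\Rot_\rho$.
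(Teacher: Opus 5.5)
Your proof is correct and follows the paper's argument. One global direct isometry recentres $D$ and fixes the orientation of $B_3$, reflections are absorbed by each body's axis of symmetry, and the $2\pi/5$ rotational symmetry of $B_5$ restricts $\rho$ to $[0,2\pi/5)$. You also spell out two points the paper leaves implicit: the easy inequality $M\le\inf A$, and the fact that $B_n$ is centred at the origin, so that a rotation about its centre is $\Rot_\rho$.
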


\begin{proof}
Given any triple of rigid motions, compose all three with one further rigid
motion carrying the image of $D$ to the disc centred at the origin and the image
of $B_3$ to a prescribed orientation. This changes no area, and $D$ is invariant
under rotations about its centre, so the resulting arrangement is a placement in
the sense of \cref{def:placement}. Reflections need not be considered separately,
since $D$, $B_3$ and $B_5$ each have an axis of symmetry. Rotating $B_5$ about
its own centre by $2\pi/5$ carries it to itself, so $\rho$ may be restricted to
$[0,2\pi/5)$.
\end{proof}

\subsection{Confining the placements}

A test body sitting far from the disc forces a large hull on its own, whatever
the rest of the arrangement does, because the hull already contains the disc and
a distant point of that body. The area of the convex hull of a disc and a single
point is elementary, and it grows with the distance, so beyond some distance the
arrangement cannot compete with the bound we are trying to prove and need not be
examined.

\begin{lemma}
\label{lem:apriori}
For $d\ge1/2$ let
\[
  f(d)=\tfrac14\bigl(\pi-\arccos\tfrac{1}{2d}\bigr)+\tfrac12\sqrt{d^{2}-\tfrac14}
\]
be the area of the convex hull of $D$ and a point at distance $d$ from its
centre. Then $f$ is strictly increasing, and if any point of the arrangement lies
at distance $d\ge 1/2$ from the origin then $A(t_3,\rho,t_5)\ge f(d)$.
\end{lemma}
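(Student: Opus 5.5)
The proof splits into three parts: an explicit geometric computation of the hull of $D$ and a point, a derivative check for monotonicity, and monotonicity of area under inclusion.

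\emph{The area formula.} Let $p$ be a point with $|p|=d\ge1/2$. The convex hull of $D$ and $p$ is the disc together with the region cut off by the two tangent lines from $p$.

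Let $\alpha\in[0,\pi/2]$ be the angle at the origin between $p$ and either point of tangency, so $\cos\alpha=\frac{1}{2d}$. The hull decomposes into two pieces:
\begin{itemize}
\item the sector of $D$ of angle $2\pi-2\alpha$ not facing $p$, of area $\tfrac12\cdot\tfrac14(2\pi-2\alpha)=\tfrac14(\pi-\alpha)$;
\item the kite formed by the origin, the two tangency points and $p$. It consists of two right triangles with legs $\tfrac12$ and $\sqrt{d^2-\tfrac14}$, so its total area is $\tfrac12\sqrt{d^2-\tfrac14}$.
\end{itemize}
Adding these gives exactly $f(d)$. The case $d=1/2$ degenerates correctly to $\pi/4$.

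\emph{Strict monotonicity.} I would differentiate. We have
\[
\frac{d}{dd}\arccos\frac{1}{2d}=\frac{1}{d\sqrt{4d^2-1}}
\qquad\text{and}\qquad
\frac{d}{dd}\,\tfrac12\sqrt{d^2-\tfrac14}=\frac{d}{\sqrt{4d^2-1}}.
\]
Hence
\[
f'(d)=\frac{1}{\sqrt{4d^2-1}}\Bigl(d-\frac{1}{4d}\Bigr)=\frac{\sqrt{4d^2-1}}{4d},
\]
which is positive for $d>1/2$. Together with continuity at $d=1/2$, this gives strict increase on $[1/2,\infty)$.

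A derivative-free alternative also works. If $d<d'$, place $p$ and $p'$ on the same ray. Then $p\in\operatorname{conv}(D\cup\{p'\})$, since the segment from $p'$ toward the origin lies in the hull. So the hulls are nested, and the inclusion is strict because $p'$ lies outside the smaller hull.

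\emph{The bound on $A$.} Suppose some point $q$ of $D\cup(B_3+t_3)\cup(\Rot_\rho B_5+t_5)$ has $|q|=d\ge1/2$. The hull of the arrangement contains both $D$ and $q$, and it is convex, so it contains $\operatorname{conv}(D\cup\{q\})$. Monotonicity of area then gives $A(t_3,\rho,t_5)\ge f(d)$. The same holds for any point of the hull itself, since that region contains $\operatorname{conv}(D\cup\{q\})$ as well.

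The only step needing care is confirming that the geometry of the tangent construction matches the formula, in particular the sector angle $2\pi-2\alpha$ and the factor $\tfrac14$ coming from radius $\tfrac12$. The rest is routine.
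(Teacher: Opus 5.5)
Your proof is correct. The area computation is the same as the paper's: a kite of area $\tfrac12\sqrt{d^2-\tfrac14}$ plus a far sector of angle $2\pi-2\arccos\tfrac{1}{2d}$. So is the final step through convexity and monotonicity of area.

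The real difference is monotonicity, and there your argument is the one that works. The paper disposes of it by asserting that both terms of $f$ increase with $d$. But the sector term $\tfrac14\bigl(\pi-\arccos\tfrac{1}{2d}\bigr)$ is decreasing, because the far arc shrinks from the whole circle towards a semicircle as the point recedes. Your derivative shows that the kite's gain $d/\sqrt{4d^2-1}$ outweighs the sector's loss $1/\bigl(4d\sqrt{4d^2-1}\bigr)$, so $f'(d)=\sqrt{4d^2-1}/(4d)>0$. This is exactly the step the paper's one-line justification gets wrong.

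Your nesting argument is a calculus-free alternative and the more geometric of the two. As written, though, it only shows that the two hulls differ, not that their areas do. To get strict inequality, put $e=p/|p|$. The disc and the tangent points from $p'$ lie in $\{x:\langle x,e\rangle\le\tfrac12\}\subseteq\{x:\langle x,e\rangle\le d\}$; the tangent points sit at height $1/(4d')$. The smaller hull also lies in $\{x:\langle x,e\rangle\le d\}$. So the part of $\operatorname{conv}(D\cup\{p'\})$ beyond the line $\langle x,e\rangle=d$ is a nondegenerate triangle with apex $p'$. Its interior misses the smaller hull, and its positive area is the gap $f(d')-f(d)$.
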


\begin{proof}
The hull of $D$ and a point $p$ at distance $d$ consists of the two tangent
segments from $p$, together with the part of the disc they cut off. Writing
$r=1/2$, the kite bounded by the two tangent segments and the two radii to the
points of tangency has area $r\sqrt{d^{2}-r^{2}}$, and the remaining circular
sector subtends $2\pi-2\arccos(r/d)$ and has area $r^{2}(\pi-\arccos(r/d))$.
Adding these and putting $r=1/2$ gives $f$, which increases in $d$ because both
terms do. The arrangement contains $D$ and the point $p$, and it is convex, so
its area is at least $f(d)$.
\end{proof}

\begin{corollary}
\label{cor:domain}
If $A(t_3,\rho,t_5)<\MainBound$ then $|t_3|\le\TmaxNaive$ and
$|t_5|\le\TmaxNaive$.
\end{corollary}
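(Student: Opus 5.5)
The plan is a contrapositive that uses only \cref{lem:apriori}, applied to one explicit point of the arrangement. Suppose $|t_3|>\TmaxNaive$; the case of $t_5$ is identical. The point to use is the centre of the translated body itself. The centroid $0$ of $B_3$ lies in $P_3$, being the centroid of the corners. By \cref{lem:contains} it therefore lies in $B_3$, so $t_3\in B_3+t_3$. For $B_5$ the rotation $\Rot_\rho$ is about the centre, which stays at the origin, so $t_5\in\Rot_\rho B_5+t_5$ in the same way. In either case the arrangement contains a point at distance $d=|t_i|>\TmaxNaive\ge\tfrac12$ from the origin, and \cref{lem:apriori} gives $A(t_3,\rho,t_5)\ge f(d)$.

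Since $f$ is strictly increasing on $[\tfrac12,\infty)$ by \cref{lem:apriori}, we have $f(d)>f(\TmaxNaive)$. It remains to check that $f(\TmaxNaive)\ge\MainBound$, which turns $A<\MainBound$ into a contradiction. This is a single evaluation. With $d=0.7$ we have $1/(2d)=5/7$, so
\[
  f(0.7)=\tfrac14\bigl(\pi-\arccos\tfrac57\bigr)+\tfrac12\sqrt{0.24}.
\]
The first term is about $0.5916$ and the second about $0.2449$, giving $f(0.7)\approx\fAtTmax$. That exceeds \MainBound{} by about $2\times10^{-3}$.

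The only step needing care is making this numerical comparison rigorous, and the margin is large enough that crude enclosures suffice. I would first try elementary bounds. For the square root, $0.4898^2<0.24$, so $\tfrac12\sqrt{0.24}>0.2449$. For the arc cosine, $\arccos(5/7)<0.7753$ follows by checking $\cos(0.7753)<5/7$ from a short Taylor bound with alternating remainder. Together with $\pi>3.14159$ these give $f(0.7)>0.5915+0.2449=0.8364>\MainBound$.

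One could sharpen the radius by using a farther point of the body. In the direction of $t_i$ the body reaches at least its inradius $1-R_n$ beyond its centre, since $B_n$ has constant width one. The stated constant \TmaxNaive{} does not need this refinement, so I would leave it out.
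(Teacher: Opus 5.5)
Your proof is correct and is the paper's argument: the centre $t_i$ of each placed body lies in that body, so \cref{lem:apriori} and the monotonicity of $f$ give $A\ge f(|t_i|)>f(\TmaxNaive)\approx\fAtTmax>\MainBound$. Your centroid argument via \cref{lem:contains} and your elementary enclosures of $f(\TmaxNaive)$ only make explicit what the paper states directly.

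One small point in carrying it out. At $x=0.7753$ the quartic upper bound $1-x^2/2+x^4/24$ for $\cos x$ is about $0.71451>5/7$, so your Taylor check must be carried to the $x^8$ term. The looser estimate $\arccos(5/7)<0.78$ already follows from the quartic bound and still gives $f(\TmaxNaive)>0.835$.
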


\begin{proof}
The centre of $B_3$ lies in $B_3$, so after translation it lies at distance
$|t_3|$ from the origin. If $|t_3|>\TmaxNaive$ then \cref{lem:apriori} gives
$A\ge f(|t_3|)>f(\TmaxNaive)>\fAtTmax>\MainBound$. The same argument applies to
$t_5$.
\end{proof}

The search therefore runs over the compact box
$[-\TmaxNaive,\TmaxNaive]^{2}\times[0,2\pi/5]\times[-\TmaxNaive,\TmaxNaive]^{2}$.

\begin{remark}
\label{rem:sharperdomain}
\Cref{cor:domain} is not the sharpest available. Applying \cref{lem:apriori} to
the farthest corner of a body rather than to its centre uses the fact that the
corners already lie at distance $R_n$ from the centre before any translation. For
$n$ corners equally spaced, some corner lies within angle $\pi/n$ of any given
direction, so $\max_j|\Rot_\rho V_j+t|\ge\bigl(R_n^{2}+|t|^{2}+2R_n|t|\cos(\pi/n)
\bigr)^{1/2}$ uniformly in $\rho$, and the resulting domain is
$|t_3|\le\TmaxThree$ and $|t_5|\le\TmaxFive$, smaller by a factor of
\DomainShrink{} in four-dimensional volume. This recovers the box used by Brass
and Sharifi~\cite{BrassSharifi2005}. It does not reduce the cost of the
computation, for the reason given in \cref{rem:nosaving}, and the certificate on
record uses \cref{cor:domain}.
\end{remark}

\subsection{The erosion estimate}

The difficulty in searching a box of placements is that each box contains
infinitely many of them. Subdividing helps only if a single computation can
settle an entire box, and for that we need one body that is contained in every
placement the box allows. A Reuleaux polygon is an intersection of unit discs
about its corners, and moving the body moves those corners. If every corner moves
by at most $\delta$, then shrinking every disc by $\delta$ before moving produces
a body inside all of them at once, because a point within $1-\delta$ of the
original centre is within $1$ of the moved one. That is the triangle inequality
and nothing more. \Cref{fig:erosion} draws the situation for $B_3$.

\begin{figure}[t]
\centering
\includegraphics{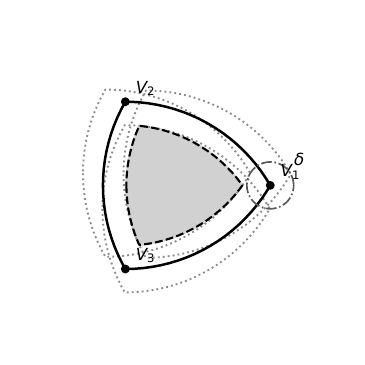}
\caption{The estimate of \cref{lem:erosion} for $B_3$. Solid, the body
$\bigcap_j\overline{D}(V_j,1)$ with its corners at the centre of a box. Dotted,
three of the placements that box allows, each moving every corner by at most
$\delta$, the dash-dotted circle showing that budget at $V_1$. Shaded, the set
$\bigcap_j\overline{D}(V_j,1-\delta)$, which lies inside every one of them. Drawn
with $\delta=0.14$ for legibility; the certificate reaches values many orders
smaller.}
\label{fig:erosion}
\end{figure}

\begin{lemma}[Erosion]
\label{lem:erosion}
Let $V_1,\dots,V_n$ and $W_1,\dots,W_n$ be points with $|W_j-V_j|\le\delta$ for
every $j$, where $0\le\delta\le1$. Then
\[
  \bigcap_{j=1}^{n}\overline{D}(V_j,1-\delta)
  \;\subseteq\;
  \bigcap_{j=1}^{n}\overline{D}(W_j,1).
\]
The left-hand side is non-empty precisely when $1-\delta$ is at least the
circumradius of $\{V_1,\dots,V_n\}$, that is, at least
$\min_p\max_j|p-V_j|$.
\end{lemma}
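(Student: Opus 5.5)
The plan has two parts, one for each sentence of \cref{lem:erosion}, and both are elementary.

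\emph{The containment.} Take a point $p$ in the left-hand side and fix an index $j$. By hypothesis $|p-V_j|\le 1-\delta$ and $|V_j-W_j|\le\delta$, so the triangle inequality gives
\[
  |p-W_j|\;\le\;|p-V_j|+|V_j-W_j|\;\le\;(1-\delta)+\delta\;=\;1,
\]
that is, $p\in\overline{D}(W_j,1)$. Since $j$ was arbitrary, $p$ lies in the intersection on the right. The assumption $\delta\le1$ is needed only so that the radius $1-\delta$ is non-negative and the discs on the left are well defined. No other hypothesis is used; in particular nothing requires the $W_j$ to form a rigid copy of the $V_j$.

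\emph{The non-emptiness criterion.} Write $r=1-\delta\ge0$. A point $p$ lies in $\bigcap_j\overline{D}(V_j,r)$ exactly when $\max_j|p-V_j|\le r$. So the intersection is non-empty if and only if some $p$ satisfies $\max_j|p-V_j|\le r$.

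That condition is equivalent to $\inf_p\max_j|p-V_j|\le r$, provided the infimum is attained. It is attained because $g(p)=\max_j|p-V_j|$ is continuous and coercive: $g(p)\ge|p-V_1|$, so outside a large disc about $V_1$ the function exceeds $g(V_1)$. We may therefore minimise over that compact disc. The minimum value of $g$ is the radius of the smallest enclosing disc, which is the circumradius meant in the statement. Hence the left-hand side is non-empty precisely when $1-\delta\ge\min_p\max_j|p-V_j|$.

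I expect no real obstacle here. The only point needing care is the attainment of the minimum, which is what justifies writing $\min$ rather than $\inf$ and makes ``non-empty'' match ``$\ge$'' rather than ``$>$''. It may also be worth adding a remark for the regular polygon: by symmetry and uniqueness of the smallest enclosing disc (or the standard fact for acute configurations), the circumradius equals $R_n$. This gives the usable condition $\delta\le1-R_n$, although the lemma as stated does not require it.
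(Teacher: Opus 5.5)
Your proof is correct and follows the paper's own argument: the triangle inequality $|p-W_j|\le(1-\delta)+\delta=1$ gives the containment, and $\bigcap_j\overline{D}(V_j,1-\delta)$ is non-empty exactly when some $p$ satisfies $\max_j|p-V_j|\le 1-\delta$. Your continuity-and-coercivity argument, showing that the minimum of $\max_j|p-V_j|$ is attained, fills in a detail the paper leaves implicit when it writes $\min$ rather than $\inf$.
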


\begin{proof}
Let $p$ belong to the left-hand side and fix $j$. Then
$|p-W_j|\le|p-V_j|+|V_j-W_j|\le(1-\delta)+\delta=1$, so
$p\in\overline{D}(W_j,1)$. As $j$ was arbitrary, $p$ lies in the right-hand side.
For the second statement, $\bigcap_j\overline{D}(V_j,\varrho)$ is non-empty if and
only if some $p$ satisfies $|p-V_j|\le\varrho$ for all $j$, which is to say
$\min_p\max_j|p-V_j|\le\varrho$.
\end{proof}

We apply this with $V_j$ the corners of a test body positioned at the centre of a
box, and $W_j$ its corners at an arbitrary placement in that box. What is needed
is a bound on how far a corner can move.

\begin{lemma}
\label{lem:displacement}
Let a box of placements in the sense of \cref{def:box} have translation
half-widths $h_x,h_y$ and rotation half-width $h_\rho$. A corner of $B_3$, whose orientation is fixed
by the gauge, moves by at most $\delta_3=\sqrt{h_x^{2}+h_y^{2}}$ from its
position at the centre of the box. A corner of $B_5$ moves by at most
$\delta_5=\sqrt{h_x^{2}+h_y^{2}}+2R_5\sin(h_\rho/2)$.
\end{lemma}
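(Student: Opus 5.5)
The plan is to write the position of each corner explicitly as a function of the placement parameters, and to bound its change from the box centre by the triangle inequality, splitting it into a translation part and a rotation part. By \cref{def:placement}, at placement $(t_3,\rho,t_5)$ the corners of $B_3$ sit at $V_j+t_3$, where the $V_j$ are the corners of \cref{def:reuleaux} in the orientation fixed by the gauge. The corners of $B_5$ sit at $\Rot_\rho V_j+t_5$, with $V_j$ now the corners of the pentagon centred at the origin, so that $|V_j|=R_5$. Let the box centre have coordinates $(c_3,c_\rho,c_5)$, and take any placement $(t_3,\rho,t_5)$ in the box.

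For $B_3$ the displacement of a corner is $(V_j+t_3)-(V_j+c_3)=t_3-c_3$. Its coordinates are bounded in absolute value by $h_x$ and $h_y$, so its Euclidean length is at most $\sqrt{h_x^2+h_y^2}$. This gives $\delta_3$ immediately.

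For $B_5$ I would write the displacement as
\[
  (\Rot_\rho V_j+t_5)-(\Rot_{c_\rho}V_j+c_5)=(t_5-c_5)+(\Rot_\rho V_j-\Rot_{c_\rho}V_j).
\]
The first term is bounded by $\sqrt{h_x^2+h_y^2}$ exactly as before. For the second term I would use that rotations about the origin preserve length: $\Rot_\rho V_j$ and $\Rot_{c_\rho}V_j$ both lie on the circle of radius $R_5$ and subtend the angle $|\rho-c_\rho|\le h_\rho$ at the origin. The chord between them therefore has length $2R_5\sin(|\rho-c_\rho|/2)$. To replace $|\rho-c_\rho|$ by $h_\rho$, I need $x\mapsto\sin(x/2)$ to be increasing on $[0,h_\rho]$. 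This holds provided $h_\rho\le\pi$, and that is automatic here because $\rho$ ranges over $[0,2\pi/5]$, so every box has $h_\rho\le\pi/5$. Adding the two bounds gives $\delta_5$.

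None of these steps is a real obstacle. The only points needing care are the following. First, the rotation must be taken about the body's centre, which is the origin before translation, so that $|V_j|=R_5$ is the correct radius. Second, the monotonicity of the chord length requires the stated restriction on $h_\rho$. I would note that restriction explicitly, or else replace the rotation bound by the cruder estimate $R_5h_\rho$ if a box could ever have a wider angular half-width.
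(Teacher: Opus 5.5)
Your proof is correct and matches the paper's: you split the corner's displacement into a translation part bounded by $\sqrt{h_x^2+h_y^2}$ and a rotation chord $2R_5\sin(|\rho-c_\rho|/2)$, then add the two by the triangle inequality. You also observe that replacing $|\rho-c_\rho|$ by $h_\rho$ requires $h_\rho\le\pi$, which holds for every box in the subdivision since $h_\rho\le\pi/5$; the paper's proof uses this monotonicity step without saying so.
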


\begin{proof}
A translation differing from the box centre by $(\Delta_x,\Delta_y)$ displaces
every point by at most $\sqrt{h_x^{2}+h_y^{2}}$. A rotation by $\Delta_\rho$
carries a point at distance $R_5$ from the body centre through a chord of length
$2R_5\sin(|\Delta_\rho|/2)\le2R_5\sin(h_\rho/2)$. The triangle inequality bounds
the composition by the sum.
\end{proof}

\subsection{The bound on a box}

Combining the last three statements gives a single quantity that bounds $A$ from
below throughout a box, which is what the search evaluates.

\begin{proposition}
\label{prop:boxbound}
Let $\mathcal{B}$ be a box of placements as in \cref{def:box}, let $V^{(3)}_j$
and $V^{(5)}_j$ be the
corners of $B_3$ and $B_5$ positioned at its centre, and let $\delta_3,\delta_5$
be as in \cref{lem:displacement}. Put
$C_n=\bigcap_j\overline{D}(V^{(n)}_j,1-\delta_n)$ and
\[
  \beta(\mathcal{B})=\operatorname{area}
  \bigl(\operatorname{conv}(D\cup C_3\cup C_5)\bigr).
\]
Then $A(t_3,\rho,t_5)\ge\beta(\mathcal{B})$ for every placement in $\mathcal{B}$.
\end{proposition}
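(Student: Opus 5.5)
The plan is to show that, for every placement $(t_3,\rho,t_5)\in\mathcal{B}$, the hull whose area is $A(t_3,\rho,t_5)$ contains $D\cup C_3\cup C_5$. Monotonicity of the convex hull and of area then gives the inequality at once.

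Fix a placement in $\mathcal{B}$ and write $W^{(3)}_j$, $W^{(5)}_j$ for the corners of $B_3+t_3$ and $\Rot_\rho B_5+t_5$ at that placement.

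First I will record that these moved bodies are again intersections of unit discs about their moved corners. A rigid motion $g$ carries $\overline{D}(V,1)$ to $\overline{D}(g(V),1)$ and commutes with intersections. Since $B_n=\bigcap_j\overline{D}(V_j,1)$ by \cref{def:reuleaux}, this gives $B_3+t_3=\bigcap_j\overline{D}(W^{(3)}_j,1)$, and likewise for $B_5$. The corners $V^{(n)}_j$ at the box centre are the images of the same $V_j$ under the centre motion, so the labels of $V$ and $W$ correspond.

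Second, \cref{lem:displacement} gives $|W^{(n)}_j-V^{(n)}_j|\le\delta_n$ for every $j$. The only detail to check is that the rotation difference $|\Delta_\rho|$ is at most $h_\rho$ and that $h_\rho\le\pi$, so that $\sin$ is monotone on the relevant range. Next I need $\delta_n\le 1$, which is the hypothesis of \cref{lem:erosion}. If instead $\delta_n>1$, then $1-\delta_n<0$ and every disc of negative radius is empty, so $C_n=\emptyset$ and the containment $C_n\subseteq\bigcap_j\overline{D}(W^{(n)}_j,1)$ holds trivially. Otherwise \cref{lem:erosion} gives $C_n\subseteq\bigcap_j\overline{D}(W^{(n)}_j,1)$, which is the moved body.

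Third, I conclude. We have $D\cup C_3\cup C_5\subseteq D\cup(B_3+t_3)\cup(\Rot_\rho B_5+t_5)$, so the convex hull of the left side lies in that of the right side. Taking areas gives $\beta(\mathcal{B})\le A(t_3,\rho,t_5)$. If some $C_n$ is empty the bound is weaker but still valid, and it never becomes false.

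The main obstacle is bookkeeping rather than analysis. The corner indexing at the box centre has to match the indexing at the arbitrary placement, so that \cref{lem:erosion} is applied to corresponding pairs. The degenerate range of $\delta_n$ also has to be handled. Neither step needs more than the triangle inequality.
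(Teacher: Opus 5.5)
Your proof is correct and is the paper's argument. You use \cref{lem:displacement} to bound how far each corner moves, \cref{lem:erosion} to place $C_n$ inside each placed body, and monotonicity of convex hull and area to finish. Beyond that, you make explicit several points the paper leaves tacit: a rigid motion carries $\bigcap_j\overline{D}(V_j,1)$ to the intersection of unit discs about the moved corners, with matching labels; the case $\delta_n>1$ holds vacuously; and the standing requirement $h_\rho\le\pi$ is satisfied by every box of the subdivision, since the root has $h_\rho=\pi/5$ and splits only shrink it.
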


\begin{proof}
Fix a placement in $\mathcal{B}$ and let $W^{(n)}_j$ be the corners of the two
bodies as placed. By \cref{lem:displacement}, $|W^{(n)}_j-V^{(n)}_j|\le\delta_n$
for every $j$, so \cref{lem:erosion} gives $C_n\subseteq\bigcap_j
\overline{D}(W^{(n)}_j,1)$, which is the placed body. Hence
$D\cup C_3\cup C_5$ is contained in the union of the three placed bodies, and
taking convex hulls and then areas preserves the inclusion.
\end{proof}

\begin{remark}
\label{rem:degenerate}
When $\delta_n$ is large enough that $C_n$ is empty, which by \cref{lem:erosion}
happens exactly when $1-\delta_n$ falls below $R_n$, the proposition still holds
with the empty set contributing nothing, and the search falls back on
\cref{lem:apriori}. The thresholds are $\delta_3=\InradThree$ and
$\delta_5=\InradFive$. Boxes that large occur only at the top of the subdivision.
\end{remark}

\begin{remark}
\label{rem:notgrid}
\Cref{prop:boxbound} is an inequality about every placement in $\mathcal{B}$, not
a sample of them. The subdivision that follows decides only where to stop
refining, and no step of the argument treats a finite set of placements as
representative of the box that contains it.
\end{remark}

\section{The certified computation}
\label{sec:computation}

\Cref{prop:boxbound} settles a whole box at once, so the remaining task is to
cover the domain of \cref{cor:domain} by finitely many boxes on each of which the
bound is good enough. That is a subdivision, and it either finishes or it does
not. This section describes the subdivision, the record it leaves behind, and the
separate program that checks the record.

Throughout, $\tau=\MainBound$ denotes the value being certified and
$\mathcal{D}=[-\TmaxNaive,\TmaxNaive]^{2}\times[0,2\pi/5]\times
[-\TmaxNaive,\TmaxNaive]^{2}$ the domain.

\subsection{The subdivision}

\begin{proposition}
\label{prop:cover}
Suppose $\mathcal{B}_1,\dots,\mathcal{B}_N$ are boxes of placements with
$\mathcal{D}\subseteq\bigcup_i\mathcal{B}_i$ and
$\beta(\mathcal{B}_i)\ge\tau$ for every $i$. If in addition
$\tau\le f(\TmaxNaive)$, then $\Lambda\ge\tau$.
\end{proposition}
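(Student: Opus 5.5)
By \cref{lem:testset} it suffices to show $M(D,B_3,B_5)\ge\tau$, and by \cref{lem:gauge} this reduces to showing $A(t_3,\rho,t_5)\ge\tau$ for every placement with $t_3,t_5\in\mathbb{R}^2$ and $\rho\in[0,2\pi/5)$. I would fix such a placement and split into two cases: either it lies in $\mathcal{D}$ or it does not. The rotation coordinate is always in $[0,2\pi/5]$, so the second case means $|t_{3x}|$, $|t_{3y}|$, $|t_{5x}|$ or $|t_{5y}|$ exceeds $\TmaxNaive$.

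Suppose first the placement lies in $\mathcal{D}$. The covering hypothesis gives an index $i$ with the placement in $\mathcal{B}_i$. \Cref{prop:boxbound} then gives $A(t_3,\rho,t_5)\ge\beta(\mathcal{B}_i)\ge\tau$. If some $C_n$ is empty, \cref{rem:degenerate} applies: the empty set contributes nothing to the hull, so \cref{prop:boxbound} still holds as stated and nothing extra is needed.

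Suppose instead the placement lies outside $\mathcal{D}$. Then some translation, say $t_3$, has a coordinate of absolute value greater than $\TmaxNaive$, so $|t_3|>\TmaxNaive\ge1/2$. The centre of $B_3$ lies in the placed body at distance $|t_3|$ from the origin. \Cref{lem:apriori} with strict monotonicity of $f$ gives $A\ge f(|t_3|)>f(\TmaxNaive)\ge\tau$. The same argument handles $t_5$, using the centre of $\Rot_\rho B_5+t_5$, which is $t_5$. This is where the hypothesis $\tau\le f(\TmaxNaive)$ is used, and it replaces the numerical inequality quoted in \cref{cor:domain}.

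Combining the two cases, $A\ge\tau$ for every placement. Taking the infimum gives $M(D,B_3,B_5)\ge\tau$, and \cref{lem:testset} then gives $\Lambda\ge\tau$.

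The only delicate step is the bookkeeping in the second case. A coordinate exceeding $\TmaxNaive$ bounds the Euclidean norm from below, which is the direction we need, so the box domain is no problem. The infimum over the half-open range of $\rho$ is fully covered by the closed interval in $\mathcal{D}$. Nothing here needs computation; all the numerical content sits in the hypothesis $\beta(\mathcal{B}_i)\ge\tau$.
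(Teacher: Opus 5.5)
Your proposal is correct and matches the paper's proof. Both split on whether the placement lies in $\mathcal{D}$: outside it you use \cref{lem:apriori} with $\tau\le f(0.70)$, inside it you use the covering hypothesis with \cref{prop:boxbound}, and you conclude through \cref{lem:gauge} and \cref{lem:testset}. Your extra checks, that a coordinate above $0.70$ forces $|t|>0.70\ge 1/2$ and that empty $C_n$ cause no trouble, are details the paper leaves implicit.
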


\begin{proof}
Let $(t_3,\rho,t_5)$ be any placement. By \cref{lem:gauge} we may take
$\rho\in[0,2\pi/5)$, so only the translations can leave $\mathcal{D}$. If the
placement lies outside it then $|t_3|>\TmaxNaive$ or $|t_5|>\TmaxNaive$, and
\cref{lem:apriori} gives $A\ge f(\TmaxNaive)\ge\tau$. Otherwise it lies in some $\mathcal{B}_i$, and
\cref{prop:boxbound} gives $A\ge\beta(\mathcal{B}_i)\ge\tau$. So $A\ge\tau$
everywhere, hence $M(D,B_3,B_5)\ge\tau$ by \cref{lem:gauge}, and
\cref{lem:testset} gives $\Lambda\ge\tau$.
\end{proof}

The covering is produced by bisection. A box whose bound already reaches $\tau$
is discarded, since \cref{prop:cover} asks nothing more of it. Any other box is
split into two, and the two are treated in the same way. Which coordinate to
split matters for efficiency and not for correctness: we split the one on which
the box is widest, measuring the rotation coordinate by the distance a corner of
$B_5$ travels, so that $h_\rho$ is compared as $R_5h_\rho$ against the
translation half-widths.

\begin{definition}
\label{def:split}
The \emph{split} of a box $\mathcal{B}$ halves its widest coordinate in the
weighting just described, producing two boxes whose centres are displaced by
$h/2$ and whose half-width in that coordinate is $\tfrac{h}{2}(1+\varepsilon)$
with $\varepsilon=\Inflate$, the other coordinates unchanged.
\end{definition}

\begin{lemma}
\label{lem:covers}
The two boxes produced by \cref{def:split} cover $\mathcal{B}$.
\end{lemma}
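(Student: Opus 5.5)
The claim reduces to a statement about one coordinate, since the split leaves the other four coordinates untouched. Write the coordinate being split as the interval $[c-h,c+h]$. By \cref{def:split} the two children have centres $c-h/2$ and $c+h/2$ and common half-width $h'=\tfrac{h}{2}(1+\varepsilon)$. So the children are
\[
  I_-=\bigl[c-\tfrac h2-h',\;c-\tfrac h2+h'\bigr],\qquad
  I_+=\bigl[c+\tfrac h2-h',\;c+\tfrac h2+h'\bigr].
\]
A box is a product of intervals, so the two children cover $\mathcal{B}$ exactly when $I_-\cup I_+\supseteq[c-h,c+h]$. Any point of $\mathcal{B}$ agrees with both children in the other four coordinates, so it lies in whichever child contains its split coordinate.

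For that one-dimensional inclusion I would check three things. The left end of $I_-$ is $c-h-\tfrac{h\varepsilon}{2}\le c-h$. The right end of $I_+$ is $c+h+\tfrac{h\varepsilon}{2}\ge c+h$. The two intervals meet, because the right end of $I_-$ is $c+\tfrac{h\varepsilon}{2}$, the left end of $I_+$ is $c-\tfrac{h\varepsilon}{2}$, and the first is at least the second since $h,\varepsilon\ge0$. Two overlapping closed intervals have a closed interval as their union, and that interval runs from $c-h-\tfrac{h\varepsilon}2$ to $c+h+\tfrac{h\varepsilon}2$. It therefore contains $[c-h,c+h]$, and this holds already with $\varepsilon=0$.

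The only real subtlety is that the inflation $\varepsilon$ exists to absorb floating point rounding in the computed centres and half-widths. The statement, however, concerns the exact boxes of \cref{def:split}, and in exact arithmetic the argument above is complete. If the verifier works with rounded values, I would strengthen the check so that the stored endpoints still satisfy the three inequalities. The slack available is $h\varepsilon/2$ at each outer end and $h\varepsilon$ in the overlap, while the rounding error is of relative size \Unit{} times $|c|+h$. That error is below the slack as long as $h$ is not too small compared with $|c|\cdot 2^{-53}/\varepsilon$, and the minimum half-width \Hmin{} guarantees this. I expect this quantitative comparison to be the main obstacle, if the lemma is meant to cover stored boxes. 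For the lemma as stated, I would give the exact-arithmetic argument and leave a remark pointing to \cref{sec:errors}.
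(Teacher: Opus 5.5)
Your exact-arithmetic argument is correct and is the paper's proof. As you do, the paper writes the split coordinate as $[c-h,c+h]$, notes that the children are $[c-h-\tfrac{h\varepsilon}{2},\,c+\tfrac{h\varepsilon}{2}]$ and $[c-\tfrac{h\varepsilon}{2},\,c+h+\tfrac{h\varepsilon}{2}]$, and observes that their union contains $[c-h,c+h]$ while the other coordinates are unchanged.

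The floating-point paragraph lies outside the lemma, and its final claim does not hold as stated. With $\varepsilon=\Inflate$ and $\mathbf{u}=\Unit$, your threshold $|c|\mathbf{u}/\varepsilon$ is about $1.1\times10^{-5}|c|$, which already exceeds $\Hmin$ on the rotation coordinate, where $|c|$ approaches $2\pi/5$. So the guard alone would not guarantee the inequality.
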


\begin{proof}
Write the split coordinate as $[c-h,c+h]$. The children carry that coordinate
over the intervals centred at $c\pm h/2$ of half-width
$\tfrac{h}{2}(1+\varepsilon)$, namely
$[c-h-\tfrac{h\varepsilon}{2},\,c+\tfrac{h\varepsilon}{2}]$ and
$[c-\tfrac{h\varepsilon}{2},\,c+h+\tfrac{h\varepsilon}{2}]$. Their union
contains $[c-h,c+h]$, and every other coordinate is unchanged.
\end{proof}

The widening by $\varepsilon$ makes the covering hold with room to spare rather
than exactly, so that the conclusion does not rest on two intervals meeting at a
single shared endpoint.

\begin{corollary}
\label{cor:leavescover}
The boxes at which a run stops cover the box it started from.
\end{corollary}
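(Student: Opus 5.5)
The plan is an induction on the run, tracking the set of boxes still awaiting treatment together with those at which the run has stopped. A run begins with the single box it started from, say $\mathcal{B}_0$. At every stage we have a finite list of \emph{pending} boxes and a finite list of \emph{stopped} boxes. The invariant to maintain is that the union of the pending and stopped boxes contains $\mathcal{B}_0$. It holds at the outset trivially: one pending box, no stopped boxes.

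Each step of the run does one of two things to a pending box $\mathcal{B}$. It may stop at $\mathcal{B}$, either because $\beta(\mathcal{B})\ge\tau$ or because the run records it as a leaf for any other reason. In that case $\mathcal{B}$ moves unchanged from the pending list to the stopped list, and the union is untouched. Otherwise the run replaces $\mathcal{B}$ by the two boxes of \cref{def:split}. By \cref{lem:covers} their union contains $\mathcal{B}$, so the union of all boxes can only grow and the invariant is preserved.

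A run that finishes has an empty pending list. The invariant then says exactly that the stopped boxes cover $\mathcal{B}_0$. Equivalently, one can phrase this as an induction on the finite binary tree of the run: every node's box is covered by the leaves beneath it. For a leaf this is immediate. For an internal node we apply \cref{lem:covers} to its two children and then the inductive hypothesis to each child.

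The only point needing care is termination and finiteness. The statement concerns a run that stops, and the certificate records a finite tree of \Nodes{} nodes, so induction over that tree is well founded. Nothing here depends on which coordinate is split or on the order in which pending boxes are processed. This matters because the verifier may traverse the certificate differently from the search. I therefore expect no real obstacle beyond stating clearly that the corollary is about the recorded tree, not about the search procedure.
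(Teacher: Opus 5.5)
Your proposal is correct and is essentially the paper's argument: the paper proves this by exactly the tree induction you give as your equivalent phrasing, where a discarded box covers itself and a split box is covered by its two children via \cref{lem:covers}, each child then being covered by its subtree's leaves. Your pending/stopped invariant is the same induction unrolled along the run, and it is equally valid.
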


\begin{proof}
Induction on the subdivision. A box that is discarded covers itself. A box that
is split is covered by its two children by \cref{lem:covers}, and each child is
covered by the boxes at which its own subtree stops, by the inductive hypothesis.
\end{proof}

\Cref{cor:leavescover} supplies the covering hypothesis of \cref{prop:cover}, so
what remains to be checked is only that every box at which the run stopped
satisfies $\beta\ge\tau$.

\subsection{The certificate}

Storing the boxes examined is unnecessary. The split of \cref{def:split} is a
deterministic function of a box, so the subdivision is recovered from its shape
alone, and that shape is a binary tree.

\begin{definition}
\label{def:certificate}
The \emph{search tree} of a run has the domain box at its root; a
\emph{node} is a box examined during the run; a node is a \emph{leaf} if it was
discarded, and otherwise has exactly the two children given by
\cref{def:split}. The \emph{certificate} is the sequence of bits obtained by
traversing the tree in depth-first pre-order and emitting $1$ at each internal
node and $0$ at each leaf, together with a header recording $\tau$, the domain
half-width, $\varepsilon$, and the resolution parameters of \cref{sec:verify}.
\end{definition}

A reader given the header and the bits regenerates every box: the root is
determined by the header, and each $1$ instructs the reader to apply
\cref{def:split} and descend. No box is stored, so the record costs one bit per
node.

In practice the subdivision is run as $\NSeed$ independent subtrees, obtained by
splitting the root $\SeedDepth$ times before any bound is evaluated, and their
bit blocks are concatenated in order. This changes nothing about the tree and
allows the work to be divided. It gives a forest rather than a single tree, which
supplies a consistency check: a binary tree in which every internal node has two
children has one more leaf than it has internal nodes, so a forest of $s$ such
trees with $L$ leaves in total has exactly $2L-s$ nodes.

For $\tau=\MainBound$ the emitting run produced \Nodes{} nodes and \Leaves{}
leaves in \EmitMinutes{} minutes on eight cores, satisfying the identity above,
and no box reached the refinement guard of \Hmin{} at which the run would have
been reported as incomplete. The tree itself is \Nodes{} bits, or \TreeMB{}
megabytes. The file adds an index of the \NSeed{} block lengths, a further
\BlockIndexMB{} megabytes, and pads each block to a byte boundary, giving
\CertMB{} megabytes on disk and \BitsPerNode{} bits per node.

A second run of the same search, reported in \cref{tab:scaling}, prunes at
$\tau+10^{-9}$ instead and writes no certificate. It closes on \HardBoxes{}
boxes in \HardMinutes{} minutes. The two trees differ because the pruning rules
differ; the certificate on record is the first.

\subsection{Verification}
\label{sec:verify}

Checking the certificate means recomputing, at every leaf, a quantity that is
provably at most $\beta(\mathcal{B})$ and confirming that it still reaches $\tau$.
The disc is replaced by an inscribed polygon, each eroded
set by finitely many points of it, and the area of the resulting convex hull is
evaluated by the shoelace formula. The hull itself is computed by a library
routine on the production path and by a hand-written monotone chain on a sample
of leaves; which routine orders the vertices cannot affect
\cref{lem:verifierbound}, since dropping a vertex only shrinks the hull, which
errs in the direction the bound needs. Every point used is tested against the
definition of the set it is supposed to lie in, so the collection is an inner
approximation whatever the geometry does.

\begin{definition}
\label{def:witness}
A \emph{witness set} for a convex set $C$ is a finite subset of $C$. Given a box,
the verifier forms the witness set consisting of the vertices of the regular
$\WitnessM$-gon inscribed in $D$, together with, for each of $C_3$ and $C_5$, at
most $\WitnessK$ points constructed on the boundary of that set and retained only
if they satisfy $|p-V^{(n)}_j|\le1-\delta_n$ for every $j$.
\end{definition}

\begin{lemma}
\label{lem:verifierbound}
Let $W$ be a witness set formed as in \cref{def:witness} and let
$\alpha(W)=\operatorname{area}(\operatorname{conv}W)$. Then
$\alpha(W)\le\beta(\mathcal{B})$.
\end{lemma}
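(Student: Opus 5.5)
The plan is to prove the inclusion $\operatorname{conv}W\subseteq\operatorname{conv}(D\cup C_3\cup C_5)$ and then use monotonicity of area. Since the right-hand side is convex, it suffices to show that every point of $W$ lies in $D\cup C_3\cup C_5$. That hull then contains $W$, and hence contains $\operatorname{conv}W$. So the whole argument reduces to checking membership point by point, split according to the three sources of points in \cref{def:witness}.

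For the polygon points, the vertices of the regular $\WitnessM$-gon inscribed in $D$ are at distance exactly $\tfrac12$ from the origin, so they lie in the closed disc $D=\overline{D}(0,\tfrac12)$. For the points attached to $C_n$, the retention test in \cref{def:witness} is exactly the defining condition of $C_n=\bigcap_j\overline{D}(V^{(n)}_j,1-\delta_n)$ from \cref{prop:boxbound}. So every retained point lies in $C_n$, whatever construction proposed it; how the candidate boundary points were generated plays no role. If $C_n$ is empty (\cref{rem:degenerate}), no point passes the test, and the statement remains consistent. Finally, if $W$ spans no region the hull has area zero, and the inequality is trivial.

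The one step needing care is that the statement concerns the exact quantity $\alpha(W)$, in exact arithmetic. I would state explicitly that this lemma asserts exact membership and exact area. A point accepted in floating point but lying slightly outside $C_n$, and the rounding of the shoelace sum, are not covered here; they are the perturbations deferred to \cref{sec:errors} and bounded by $\Eta$. With that separation made, the main obstacle is only to ensure that the claim about the computed hull matches the exact one. For this I would note, as the text does, that if the computed hull drops vertices it only shrinks, so its area is at most $\alpha(W)$. Whichever hull routine is used, the quantity compared to $\tau$ then never exceeds $\operatorname{area}(\operatorname{conv}W)\le\beta(\mathcal{B})$.
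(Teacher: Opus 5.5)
Your proposal is correct and is the paper's own argument. The inscribed vertices lie in $D$, and the retention test is exactly membership in $C_n$. Hence $W\subseteq D\cup C_3\cup C_5$, so $\operatorname{conv}W\subseteq\operatorname{conv}(D\cup C_3\cup C_5)$, and area is monotone. You also keep the exact-arithmetic statement separate from the floating-point effects bounded in \cref{sec:errors}, which is how the paper organises it too.
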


\begin{proof}
The vertices of an inscribed polygon lie in $D$. A retained point $p$ satisfies
$|p-V^{(n)}_j|\le1-\delta_n$ for every $j$, which is exactly the condition
$p\in C_n$ of \cref{prop:boxbound}. So $W\subseteq D\cup C_3\cup C_5$, hence
$\operatorname{conv}W\subseteq\operatorname{conv}(D\cup C_3\cup C_5)$, and areas
are monotone.
\end{proof}

An inner polygon spanned by points certified to lie in the test sets is the
device Xie~\cite{Xie2026} uses to bound the hull area from below on a parameter
domain, with the points carried as functions of the placement by outward-rounded
interval arithmetic and their cyclic order certified separately across the domain.
\Cref{lem:erosion} reaches the same end by other means. It replaces each moving
body by a fixed subset common to every placement in the box, so the witness points
do not vary with the parameters and no ordering has to be maintained across the
box. That is what admits test sets bounded by circular arcs rather than by
finitely many vertices.

Combining \cref{lem:verifierbound} with \cref{prop:boxbound}, a leaf at which
$\alpha(W)\ge\tau$ satisfies $\beta(\mathcal{B})\ge\tau$, which is what
\cref{prop:cover} requires. The retention test is what carries this argument, and
it cannot be replaced by an appeal to how the boundary points were constructed:
the construction traverses an arc whose angular span exceeds $\pi$ when
$1-\delta_n$ equals the circumradius of the corners, and points produced there
lie outside $C_n$. Testing each point removes any dependence on that geometry.

\begin{remark}
\label{rem:thresholds}
Two values must be kept apart. The subdivision was run against a threshold
slightly above $\tau$, chosen to anticipate the loss in \cref{lem:verifierbound}
and so to make the later check succeed. That threshold is a heuristic and carries
no part of the proof. What carries the proof is the verification, which tests
every leaf against $\tau$ itself. On this certificate the threshold was in fact
chosen too small to guarantee the outcome in advance, and the verifier records as
much before proceeding; every one of the \Leaves{} leaves cleared $\tau$
regardless, the smallest margin being \WorstSlack{}.
\end{remark}

\subsection{Arithmetic}
\label{sec:errors}

The quantity checked at a leaf is the area of the convex hull of finitely many
tested points, which limits how floating point can affect it. Two effects are
possible and both are bounded outright, so no appeal to the size of the observed
margin is needed.

A point is retained when its computed distance satisfies
$|p-V^{(n)}_j|\le1-\delta_n+\vartheta$ with $\vartheta=\Tau$, and a distance
computed as a square root of a sum of two squares carries relative error at most
$4\mathbf{u}$ with $\mathbf{u}=\Unit$. The corners and arc points entering that
distance are themselves computed from the sine and cosine of the box-centre
rotation, and library trigonometry is not correctly rounded in general, so a
further $8\mathbf{u}R$ is allowed for their coordinates, with
$R=\TmaxNaive+R_3$. A retained point can therefore lie outside $C_n$ by at most
$\varepsilon_0=\EpsMem$, the trigonometric term sitting three orders below
$\vartheta$ and not moving $\varepsilon_0$ at this precision. A set within $\varepsilon_0$ of a convex
region has hull area exceeding that of the region by at most
$\varepsilon_0 P+\pi\varepsilon_0^{2}$, where $P$ is a bound on the perimeter,
and the hull lies within distance $\TmaxNaive+R_3$ of the origin so
$P\le\PerimBound$. This contributes at most \ErrMembership{}. The shoelace sum
over $N$ vertices contributes at most $\gamma_N S$ with
$\gamma_N=N\mathbf{u}/(1-N\mathbf{u})$ and $S$ the sum of the magnitudes of its
terms. Both are bounded structurally rather than sampled. The witness set of
\cref{def:witness} holds at most $m+2K=\VertexBound$ points, so
$N\le\VertexBound$; every coordinate satisfies $|x|,|y|\le R$, so each of the $N$
terms is at most $R^{2}$ and $S\le NR^{2}=\ShoelaceS$. The contribution is at
most \ErrShoelace{}.

Writing $\eta=\Eta$ for the sum, every verified inequality holds with $\alpha(W)$
replaced by $\alpha(W)-\eta$. The smallest margin attained was \WorstSlack{},
larger than $\eta$ by a factor of \SlackOverEta{}, so \cref{prop:cover} applies
unchanged. Recomputing the leaf bound in \HighPrecDigits{} digit arithmetic on the
smallest boxes of the certificate reproduces the double precision value to
\HighPrecDiff{}, consistent with a bound of $\eta$.

\subsection{What the verification establishes}

The verifier reads the certificate and the header and derives everything else,
sharing no code with the subdivision, so \cref{prop:cover} is established from the
certificate alone. The certificates, the verifiers, the audits and every log cited
here are available at~\cite{Repo}, frozen at commit
\texttt{\RepoCommit}, which is the state these bounds were produced from.

Several further checks bear on whether the implementation computes what the
preceding sections describe. The bound of \cref{lem:verifierbound} is implemented
a second time against an independent convex hull routine, and \CrossCheckLeaves{}
leaves of this certificate were recomputed that way, the largest disagreement
being \CrossCheckWorst{}. The geometry kernel reproduces P\'al's $\PalLB$ to
\PalAgreement{} and the hull drawn in Figure~2 of Brass and
Sharifi~\cite{BrassSharifi2005} to \BSFigAgreement{}, neither constant having
entered its construction. It passes \KernelGates{} property tests, and each family
passes \AuditChecks{} further checks on \cref{lem:erosion} and \cref{def:split} at
the box sizes the certificate reaches. The forest identity of
\cref{def:certificate} holds exactly, so no subtree is missing from the record.

A last group tests the verifier itself, in both directions. Certificates altered
in their tree bits, truncated, or declaring a larger $\tau$, a smaller domain, or
a non-positive $\varepsilon$ are refused. Valid certificates are accepted, and so
are certificates declaring a finer witness resolution, which by
\cref{lem:verifierbound} yields a tighter bound rather than an invalid one. All
\ControlsPassed{} of these behave as required. A program that accepted every input
would pass the second half of that test and fail the first; one that refused every
input would do the reverse.

\begin{proof}[Proof of \cref{thm:main}]
Take $\tau=\MainBound$. The boxes at which the subdivision stopped cover
$\mathcal{D}$, by \cref{cor:leavescover}. At each of them the verification
evaluates $\alpha(W)$ and finds $\alpha(W)\ge\tau$, so $\beta\ge\tau$ there by
\cref{lem:verifierbound}, and the arithmetic of \cref{sec:errors} leaves this
conclusion intact. Since $f(\TmaxNaive)>\fAtTmax>\tau$, \cref{prop:cover} applies
and gives $\Lambda\ge\tau$.
\end{proof}

\section{Scope of the method}
\label{sec:scope}

Two questions decide how far this approach reaches. How much of the remaining gap
can be taken from the present test sets, and what happens if a fourth body is
added. Both are settled by measurement rather than by estimate, and the answers
are different in kind: the first is a matter of cost, the second is not.

\subsection{Cost against margin}

The subdivision must separate every placement from the value being certified, so
its cost is governed by the margin $\mu$ between $\tau$ and the smallest hull area
the family attains. \Cref{tab:scaling} takes that reference point to be a
numerically located optimum rather than the bound of \cref{prop:ceiling}, the two
differing by under $10^{-6}$ in the first family, which is far below the range of
$\mu$ tabulated. As $\tau$ approaches the ceiling the boxes near the
minimising placement must be refined further before their bound clears $\tau$,
and the count grows without bound. Running the same search at a range of targets
measures how fast.

\begin{table}[t]
\centering
\begin{tabular}{lrr@{\qquad}lrr}
\toprule
\multicolumn{3}{c}{$D,B_3,B_5$ \quad($d=5$)} &
\multicolumn{3}{c}{$D,B_3,B_5,B_7$ \quad($d=8$)}\\
\cmidrule(r){1-3}\cmidrule(l){4-6}
$\tau$ & $\mu$ & boxes & $\tau$ & $\mu$ & boxes\\
\midrule
0.8 & 0.034781 & 2{,}836 & 0.806495 & 0.030000 & 375{,}104\\
0.815 & 0.019781 & 20{,}496 & 0.812495 & 0.024000 & 913{,}876\\
0.822 & 0.012781 & 98{,}784 & 0.816495 & 0.020000 & 3{,}379{,}860\\
0.826 & 0.008781 & 345{,}964 & 0.820495 & 0.016000 & 11{,}151{,}480\\
0.829 & 0.005781 & 1{,}252{,}876 & 0.823495 & 0.013000 & 33{,}943{,}512\\
0.831 & 0.003781 & 4{,}075{,}580 & 0.825495 & 0.011000 & 82{,}262{,}500\\
0.8325 & 0.002281 & 14{,}078{,}604 &  & & \\
0.8344 & 0.000381 & 450{,}922{,}384 &  & & \\
\bottomrule
\end{tabular}
\caption{Boxes examined by the subdivision at a range of targets $\tau$. Every
row is the search of \cref{def:split} pruning at $\tau+10^{-9}$ and writing no
certificate, so the last row is the corroborating run of \cref{sec:computation}
rather than the emitting one. The margin $\mu$ is measured against a numerically
located optimum, \FiveDOptimum{} in the first family and \EightDOptimum{} in the
second, not against the proved bound of \cref{prop:ceiling}; in the first family
the two differ by under $10^{-6}$. Throughput was \ThroughputFive{} boxes per
second and \ThroughputEight{} respectively, on eight cores. Generated from
\texttt{scaling.log}.}
\label{tab:scaling}
\end{table}

\Cref{tab:scaling} records the counts and \cref{fig:scaling} plots them. The
eight-dimensional family is the steeper of the two by a wide margin, and it is
already the more expensive at every margin the two ranges share.

\begin{figure}[t]
\centering
\includegraphics{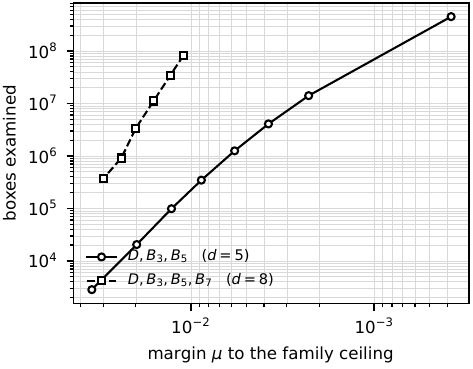}
\caption{Boxes examined against the margin $\mu$ to the family ceiling, for the
two families of \cref{tab:scaling}, on logarithmic axes with $\mu$ decreasing to
the right. The certified bound of \cref{thm:main} is the rightmost point of the
five-dimensional series.}
\label{fig:scaling}
\end{figure}

A power law $N\sim\mu^{-p}$ fits each
family, with $p=\FiveDExpGlobal$ over the five-dimensional range and
$p=\EightDExpGlobal$ over the eight-dimensional one. A rough expectation is
$p=d/2$, on the grounds that the bound fails only within a neighbourhood of the
minimising placement whose linear size scales as $\sqrt{\mu}$; that gives
\HalfDimFive{} and \HalfDimEight{} respectively. The five-dimensional fit sits
close to it and the eight-dimensional fit exceeds it.

The exponent is not constant. Between successive rows of the five-dimensional
column it rises from $3.50$ to $3.60$ and then falls to \FiveDExpTight{} between
the two tightest, so the neighbourhood in question is
not self-similar across scales and a single power law understates the cost at
wide margins and overstates it at narrow ones. Any extrapolation towards the
ceiling using a fit taken over the whole range is therefore conservative, which
is the direction that matters when the conclusion is that something is
unaffordable.

\subsection{What remains in the present family}

\Cref{prop:ceiling} applied to $D,B_3,B_5$ gives a ceiling of
\FamilyBCeilingSafe{}, so a margin of about $\HeadroomToCeiling$ is unclaimed.
Extrapolating from the last row of \cref{tab:scaling} at the local exponent
\FiveDExpTight{}, itself a two-point estimate from the two tightest margins,
certifying $\NextTarget$ would require roughly \NextTargetBoxes{} boxes, near
\NextTargetHours{} hours at the measured throughput. Nothing in the
argument changes; only the subdivision grows. We have not carried this out.

\begin{remark}
\label{rem:nosaving}
The sharper domain of \cref{rem:sharperdomain} does not reduce this cost. Run at
$\tau=\CheapTarget$ with every other parameter fixed, the subdivision examines
\BoxesCheapTarget{} boxes over the domain of \cref{cor:domain} and
\BoxesSharpDomain{} over the smaller one. The two a priori estimates agree
wherever the hull term in \cref{prop:boxbound} is the larger of the two, which is
the case throughout the neighbourhood of the minimising placement, and it is that
neighbourhood the box count is made of. A smaller domain removes boxes far from
the disc, which the subdivision was already discarding one at a time.
\end{remark}

\subsection{A fourth test body}

Adding the Reuleaux heptagon raises the attainable value to \EightDOptimum{},
this one located numerically rather than bounded as in \cref{prop:ceiling}, and
raises the parameter count from five to eight, which is where the method stops. Taking the
eight-dimensional fit of \cref{tab:scaling} at face value, certifying
$\tau=\MainBound$ in that family calls for about \EightDBoxes{} boxes, near
\EightDDays{} days at the measured throughput. Allowing for the flattening seen
in the five-dimensional column reduces that figure substantially without
approaching feasibility.

The obstruction is cost and not principle. Every statement in
\cref{sec:prelim,sec:ceiling,sec:reduction} holds verbatim for any finite family
of Reuleaux polygons, \cref{lem:erosion} included, and the subdivision would
terminate. What fails is that the number of boxes required grows faster than the
dimension is worth. Reaching the value that Gibbs~\cite{Gibbs2014} estimates for
a family of five bodies would need eleven parameters, further still.

\section{Related problems}
\label{sec:related}

The division of labour in the proof is worth stating plainly.
\Cref{sec:prelim,sec:ceiling,sec:reduction} are analytic and contain no
computation: the admissibility of the test sets rests on \cref{lem:width}, the
ceiling of \cref{prop:ceiling} on one exhibited arrangement, and the reduction on
\cref{lem:apriori,lem:erosion,lem:displacement}. The computation enters once, in
\cref{sec:computation}, as a subdivision of one compact box, and what it
establishes is the hypothesis of \cref{prop:cover} and nothing else.

Two of our constants are sharp for the method and no advance in certification
changes them. \Cref{prop:ceiling} bounds what the classical test sets can yield at
\ClassicalCeilingSafe{}, and the same argument bounds our own family at
\FamilyBCeilingSafe{}. These are properties of the test sets, not of the search,
so a faster machine or a better subdivision leaves them where they are. Passing
either requires enlarging the family, and the constant-width bodies are already
the maximal test sets by \cref{rem:completion}, so the only remaining freedom is
to use more of them or to use ones that are not regular.

Against that, the obstruction to going further is cost rather than principle, and
the two should not be confused. Within the present family a margin of about
$\HeadroomToCeiling$ is unclaimed, and \cref{sec:scope} prices it. For a fourth
test body every statement in \cref{sec:prelim,sec:ceiling,sec:reduction} holds
verbatim, the subdivision terminates, and only the box count stands in the way.
Brass and Sharifi~\cite{BrassSharifi2005} record the same barrier for their
polygonal family and leave it as the obstacle to further progress;
\cref{tab:scaling} measures where it now sits.

\Cref{lem:erosion} is not specific to Reuleaux polygons. Its proof uses nothing
about the points $V_j$ beyond their number being finite, and in fact not even
that: for an arbitrary index set the same triangle inequality gives
$\bigcap_{i}\overline{D}(V_i,1-\delta)\subseteq\bigcap_i\overline{D}(W_i,1)$
whenever $|W_i-V_i|\le\delta$. A body of constant width one is the intersection of
the closed unit discs centred at its own points~\cite{Grunbaum1963}, so the
estimate applies to every admissible test set of \cref{rem:completion}, not only
to the ones used here. Irregular Reuleaux polygons form an $(n-3)$-parameter
family at fixed $n$, and optimising over shape as well as position is a natural
next question that the present method can express without modification, at a cost
this paper does not attempt.

What would change the picture is a way to bound $M$ from below without exhausting
the placement space. Every lower bound for this problem since 1920, ours included,
controls all placements at once by subdividing them, and the cost of that is what
\cref{tab:scaling} records. A dual or variational argument,
certifying a minimum without enumerating a neighbourhood of it, would remove the
dimensional barrier rather than push it, and would reach the families whose
numerical values Gibbs~\cite{Gibbs2014} reports. The gap
$[\MainBound,\GibbsUpper]$ stands until one or the other side moves.

\bibliography{refs}

@article{Pal1920,
  author  = {P{\'a}l, Julius},
  title   = {{\"U}ber ein elementares {V}ariationsproblem},
  journal = {Danske Mat.-Fys. Meddelelser},
  volume  = {III},
  number  = {2},
  pages   = {1--35},
  year    = {1920}
}

@article{Sprague1936,
  author  = {Sprague, Roland},
  title   = {{\"U}ber ein elementares {V}ariationsproblem},
  journal = {Matematisk Tidsskrift B},
  pages   = {96--99},
  year    = {1936},
  note    = {JSTOR 24530328}
}

@article{Hansen1992,
  author  = {Hansen, Hans Christian},
  title   = {Small universal covers for sets of unit diameter},
  journal = {Geometriae Dedicata},
  volume  = {42},
  number  = {2},
  pages   = {205--213},
  year    = {1992},
  doi     = {10.1007/BF00147549}
}

@article{Duff1980,
  author  = {Duff, George F. D.},
  title   = {A smaller universal cover for sets of unit diameter},
  journal = {C. R. Math. Rep. Acad. Sci. Canada},
  volume  = {2},
  pages   = {37--42},
  year    = {1980}
}

@article{Elekes1994,
  author  = {Elekes, Gy{\"o}rgy},
  title   = {Generalized breadths, circular {C}antor sets, and the least area {UCC}},
  journal = {Discrete \& Computational Geometry},
  volume  = {12},
  pages   = {439--449},
  year    = {1994},
  doi     = {10.1007/BF02574391}
}

@article{BrassSharifi2005,
  author  = {Brass, Peter and Sharifi, Mehrbod},
  title   = {A lower bound for {L}ebesgue's universal cover problem},
  journal = {International Journal of Computational Geometry \& Applications},
  volume  = {15},
  number  = {5},
  pages   = {537--544},
  year    = {2005},
  doi     = {10.1142/S0218195905001828}
}

@article{BaezBagdasaryanGibbs2015,
  author  = {Baez, John C. and Bagdasaryan, Karine and Gibbs, Philip},
  title   = {The {L}ebesgue Universal Covering Problem},
  journal = {Journal of Computational Geometry},
  volume  = {6},
  number  = {1},
  pages   = {288--299},
  year    = {2015},
  eprint  = {1502.01251},
  archivePrefix = {arXiv}
}

@misc{Gibbs2014,
  author = {Gibbs, Philip},
  title  = {A New Slant on {L}ebesgue's Universal Covering Problem},
  year   = {2014},
  eprint = {1401.8217},
  archivePrefix = {arXiv}
}

@misc{Gibbs2018,
  author = {Gibbs, Philip},
  title  = {An Upper Bound for {L}ebesgue's Covering Problem},
  year   = {2018},
  eprint = {1810.10089},
  archivePrefix = {arXiv}
}

@incollection{Grunbaum1963,
  author    = {Gr{\"u}nbaum, Branko},
  title     = {Borsuk's problem and related questions},
  booktitle = {Convexity},
  series    = {Proceedings of Symposia in Pure Mathematics},
  volume    = {VII},
  pages     = {271--284},
  publisher = {American Mathematical Society},
  year      = {1963}
}

@inproceedings{FeketeEtAl2021,
  author    = {Fekete, S{\'a}ndor P. and Gurunathan, Vijaykrishna and
               Juneja, Kushagra and Keldenich, Phillip and Kleist, Linda and
               Scheffer, Christian},
  title     = {Packing Squares into a Disk with Optimal Worst-Case Density},
  booktitle = {37th International Symposium on Computational Geometry (SoCG 2021)},
  series    = {LIPIcs},
  volume    = {189},
  pages     = {36:1--36:16},
  year      = {2021},
  doi       = {10.4230/LIPIcs.SoCG.2021.36}
}

@article{HalesEtAl2017,
  author  = {Hales, Thomas and Adams, Mark and Bauer, Gertrud and
             Dang, Tat Dat and Harrison, John and Hoang, Le Truong and
             Kaliszyk, Cezary and Magron, Victor and McLaughlin, Sean and
             Nguyen, Tat Thang and Nguyen, Quang Truong and Nipkow, Tobias and
             Obua, Steven and Pleso, Joseph and Rute, Jason and
             Solovyev, Alexey and Ta, Thi Hoai An and Tran, Nam Trung and
             Trieu, Thi Diep and Urban, Josef and Vu, Ky and
             Zumkeller, Roland},
  title   = {A formal proof of the {K}epler conjecture},
  journal = {Forum of Mathematics, Pi},
  volume  = {5},
  year    = {2017}
}

@misc{Xie2026,
  author = {Xie, Niantao},
  title  = {A Certified Lower Bound for {L}ebesgue's Universal Cover Problem},
  year   = {2026},
  eprint = {2606.04458},
  archivePrefix = {arXiv},
  primaryClass  = {cs.CG},
  note   = {preprint, version 4, 8 July 2026}
}

@book{KellyWeiss1979,
  author    = {Kelly, Paul J. and Weiss, Max L.},
  title     = {Geometry and Convexity: A Study in Mathematical Methods},
  publisher = {Wiley},
  year      = {1979}
}

@article{Hansen1975,
  author  = {Hansen, Hans Christian},
  title   = {A small universal cover of figures of unit diameter},
  journal = {Geometriae Dedicata},
  volume  = {4},
  pages   = {165--172},
  year    = {1975}
}

@book{BrassMoserPach2005,
  author    = {Brass, Peter and Moser, William O. J. and Pach, J{\'a}nos},
  title     = {Research Problems in Discrete Geometry},
  publisher = {Springer},
  year      = {2005}
}

@article{AppelHaken1977a,
  author  = {Appel, Kenneth and Haken, Wolfgang},
  title   = {Every planar map is four colorable. {P}art {I}. {D}ischarging},
  journal = {Illinois Journal of Mathematics},
  volume  = {21},
  pages   = {429--490},
  year    = {1977}
}

@article{AppelHaken1977b,
  author  = {Appel, Kenneth and Haken, Wolfgang},
  title   = {Every planar map is four colorable. {P}art {II}. {R}educibility},
  journal = {Illinois Journal of Mathematics},
  volume  = {21},
  pages   = {491--567},
  year    = {1977}
}

@article{RobertsonEtAl1997,
  author  = {Robertson, Neil and Sanders, Daniel and Seymour, Paul and Thomas, Robin},
  title   = {The four-colour theorem},
  journal = {Journal of Combinatorial Theory, Series B},
  volume  = {70},
  pages   = {2--44},
  year    = {1997}
}

@article{SzekeresPeters2006,
  author  = {Szekeres, George and Peters, Lindsay},
  title   = {Computer solution to the 17-point {E}rd{\H o}s--{S}zekeres problem},
  journal = {The ANZIAM Journal},
  volume  = {48},
  number  = {2},
  pages   = {151--164},
  year    = {2006}
}

@article{HassSchlafly2000,
  author  = {Hass, Joel and Schlafly, Roger},
  title   = {Double bubbles minimize},
  journal = {Annals of Mathematics},
  volume  = {151},
  number  = {2},
  pages   = {459--515},
  year    = {2000}
}

@book{MartiniMontejanoOliveros2019,
  author    = {Martini, Horst and Montejano, Luis and Oliveros, D{\'e}borah},
  title     = {Bodies of Constant Width: An Introduction to Convex Geometry
               with Applications},
  publisher = {Birkh{\"a}user},
  year      = {2019},
  doi       = {10.1007/978-3-030-03868-7},
  note      = {MR3930585, Zbl 1468.52001}
}

@misc{Repo,
  author       = {Mishra, Ujjwal},
  title        = {Certificates and verifiers for a lower bound on {L}ebesgue's
                  universal covering problem},
  year         = {2026},
  howpublished = {\url{https://github.com/Ujjwal238/universal-cover-problem}},
  note         = {commit da956b117a5f876fc3c530db3e00673e6cb18865}
}
\end{document}